\documentclass[pdflatex,sn-mathphys-num]{sn-jnl}% Math and Physical Sciences Numbered Reference Style

\usepackage{graphicx}%
\usepackage{multirow}%
\usepackage{amsmath,amssymb,amsfonts}%
\usepackage{amsthm}%
\usepackage{mathrsfs}%
\usepackage[title]{appendix}%
\usepackage{xcolor}%
\usepackage{textcomp}%
\usepackage{manyfoot}%
\usepackage{booktabs}%
\usepackage{algorithm}%
\usepackage{algpseudocode}%
\usepackage{listings}%
\usepackage{float}
\usepackage{dblfloatfix}%
\usepackage{dblfloatfix}

% Theorem environments
\newtheorem{assumption}{Assumption}
\newtheorem{lemma}{Lemma}
\newtheorem{corollary}{Corollary}  
\newcommand{\Proj}{\mathrm{P}}
\theoremstyle{thmstyleone}%
\newtheorem{theorem}{Theorem}% 
\newtheorem{proposition}{Proposition}%
\theoremstyle{thmstyletwo}%
\theoremstyle{thmstylethree}%
\raggedbottom

\begin{document}

\title{Dynamic Inverse Optimization under Drift and Shocks: Theory, Regret Bounds, and Applications}

\author*[1]{\fnm{Jinho} \sur{Cha}}\email{jcha@GwinnettTech.edu}

\affil*[1]{\orgdiv{Department of Computer Science}, 
\orgname{Gwinnett Technical College}, 
\orgaddress{\city{Lawrenceville}, \state{GA}, \country{USA}}}

\abstract{The growing prevalence of drift and shocks in modern decision environments exposes a gap between classical optimization theory and real-world practice. Standard models assume fixed objectives, yet organizations—from hospitals to power grids—routinely adapt to shifting priorities, noisy data, and abrupt disruptions. To address this gap, this study develops a dynamic inverse optimization framework that recovers hidden, time-varying preferences from observed allocation trajectories. The framework unifies identifiability analysis with regret guarantees: conditions are established for existence and uniqueness of recovered parameters, and sharp static and dynamic regret bounds are derived to characterize responsiveness to gradual drift and sudden shocks. Methodologically, a drift-aware estimator grounded in convex analysis and online learning theory is introduced, with finite-sample guarantees on recovery accuracy. Computational experiments in healthcare, energy, logistics, and finance reveal heterogeneous recovery patterns, ranging from rapid resilience to persistent vulnerability. Overall, dynamic inverse optimization emerges as both a theoretical contribution and a broadly applicable diagnostic tool for benchmarking resilience, uncovering hidden behavioral shifts, and guiding policy interventions in complex stochastic systems.
}

\keywords{Inverse Optimization, Dynamic Preferences, Drift and Shocks, Identifiability, Dynamic Regret, Decision Analytics}

\maketitle

\noindent\textbf{Note:} Preprint. Submitted to \textit{Computational Management Science}.

\maketitle

%%======================
\section{Introduction}
\label{sec:intro}

\subsection{Motivation}
Optimization provides the mathematical foundation for decision-making in 
transportation, energy, finance, and healthcare. Classical approaches assume that 
objectives such as costs, risks, or fairness weights are known and fixed, and 
forward optimization then yields optimal allocations. In practice, however, such 
objectives are rarely transparent. What is typically observable are 
\emph{decisions themselves}—transport schedules, electricity dispatches, portfolio 
choices, or triage outcomes—while the implicit trade-offs that generated them 
remain hidden.  

Moreover, these hidden preferences are rarely static. Organizations adapt to 
fluctuating markets, regulators modify policies, and environments change 
unexpectedly. As a result, decision-making priorities drift over time. 
This raises an important question:  
\emph{Can we recover latent, time-varying preferences from observed allocation 
trajectories, and evaluate the quality of such recovery?}

\subsection{Research Gap}
Inverse optimization (IO) offers a structured approach to inferring hidden 
objectives from observed decisions \citep{ahuja2001inverse, chan2014inverse}. 
Applications have appeared in transportation, energy systems, and healthcare 
\citep{bertsimas2015data, eshghi2016inverse}, and recent surveys highlight its 
broad methodological scope \citep{chan2019inverse}.  
At the same time, online convex optimization (OCO) has developed powerful tools 
for handling non-stationary environments, providing dynamic regret bounds under 
variation budgets \citep{besbes2015nonstationary, zinkevich2003online, hall2015online}.  

Most existing IO studies remain focused on \emph{static} settings where 
preferences are fixed, while OCO studies address drift only for forward loss 
minimization. The intersection—\emph{recovering time-varying latent preferences 
in applied allocation problems with identifiability and regret guarantees}—has 
received limited attention. To the best of our knowledge, existing work in online 
IO has analyzed regret under fixed objectives \citep{sakaue2018online, dong2020generalized}, 
but the case of drifting objectives within inverse optimization, particularly in 
applied decision-making contexts, remains relatively unexplored.  

\subsection{Contributions}
This paper develops a framework for \emph{dynamic inverse optimization under drift}, 
linking IO with ideas from OCO and sequential decision analysis. 
Our contributions are as follows:
\begin{enumerate}
    \item \textbf{Modeling.} We formulate dynamic IO as a bilevel learning problem, 
    where a forward allocation model captures observed behavior and latent 
    preferences evolve subject to bounded variation.  
    \item \textbf{Theoretical Results.} We establish conditions under which 
    identifiability holds, show existence and uniqueness of recovered preference 
    trajectories, and derive static and dynamic regret bounds for drift-aware 
    estimation.  
    \item \textbf{Methodology.} We propose estimation algorithms based on OCO 
    techniques, demonstrating robustness to noise and sample sparsity and the 
    ability to track drifting objectives.  
    \item \textbf{Validation.} Using synthetic yet representative benchmarks 
    (dynamic newsvendor, multi-period knapsack, interdiction), we illustrate that 
    the proposed approach recovers preference dynamics and achieves consistent 
    performance under uncertainty.  
\end{enumerate}

By combining identifiability analysis, regret guarantees, and computational 
validation, the study extends IO beyond static formulations and connects it with 
dynamic decision contexts. The results contribute to the literature on inverse 
optimization and sequential decision-making, and suggest the potential of 
dynamic IO to support applied management science problems where objectives 
evolve over time.

\subsection{Paper Structure}
The remainder of the paper is organized as follows. 
Section~\ref{sec:model} introduces the forward allocation models and formulates 
the dynamic IO problem. 
Section~\ref{sec:theory} presents theoretical results on identifiability and regret, 
including the development of estimation algorithms. 
Section~\ref{sec:experiments} reports computational experiments. 
Section~\ref{sec:implications} discusses managerial and industrial implications. 
Finally, Section~\ref{sec:conclusion} concludes with directions for future research.

%%======================
\section{Related Literature}
\label{sec:literature}

%-------------------------
\subsection{Forward Allocation Models}
Forward allocation problems remain central in operations research, with recent 
advances focusing on robustness, fairness, and learning. Robust optimization 
provides tractable methods for uncertain environments: Ben-Tal and Nemirovski 
\citep{ben2002robust} laid the foundations, and Bertsimas and Sim 
\citep{bertsimas2004robust} introduced the widely cited concept of the 
\emph{price of robustness}, quantifying trade-offs between efficiency and 
worst-case protection. Their ideas quickly spread to applications such as supply 
chains and energy markets, making robustness a standard design principle.  

Fairness has also become integral: Bertsimas, Farias, and Trichakis 
\citep{bertsimas2011price} analyzed the \emph{price of fairness}, and Fornier, 
Leclere, and Pinson \citep{fornier2025fairness} embedded fairness into 
shared-energy allocation, showing how equity concerns can be built into dynamic 
stochastic systems. Finally, learning has been incorporated into stochastic settings. Levi, Roundy, 
and Shmoys \citep{levi2007provably} and Besbes and Muharremoglu 
\citep{besbes2013ordering} demonstrated how adaptive policies improve allocation under demand 
uncertainty, highlighting the importance of data-driven and adaptive strategies. 
Despite these advances, all such models assume that objectives are explicitly 
specified, rather than inferred from observed decisions.

%-------------------------
\subsection{Inverse Optimization Approaches}
Inverse optimization (IO) seeks to recover hidden objective functions from observed 
decisions. Classic work in operations research established identifiability and 
consistency properties, with applications in portfolio selection, transportation, and 
healthcare allocation \citep{ahuja2001inverse, chan2014inverse, bertsimas2015data}. 
These foundational contributions demonstrated that IO can serve as a diagnostic tool 
to reveal implicit trade-offs behind observed behavior, sparking a wide range of 
applications in data-driven decision contexts.  

More recently, emerging research has begun to relax assumptions of stationarity. 
Inverse optimal control with time-varying cost functions has been explored in human 
movement analysis, using moving-window techniques to capture changing objectives 
over the course of a motion \citep{westermann2020timevarying}. In the control domain, 
an adaptive online IOC framework using neural networks was introduced, offering 
real-time estimation of shifting cost weights along with well-posedness and 
convergence guarantees \citep{cao2024adaptive}. From an application standpoint, IO 
has also been applied to commuter behavior, where \citep{xu2024commutepref} recover 
schedule preferences and crowding perceptions based on aggregate commuting data. 
These developments signal a growing trend toward dynamic and data-driven IO, 
highlighting the flexibility of the paradigm across domains ranging from biomechanics 
to transportation.  

However, many existing IO models remain fundamentally \emph{static}, estimating 
time-invariant objectives from offline data. While some online IO frameworks 
handle sequential decisions with regret bounds, they still presume a fixed underlying 
objective and do not address systematic preference drift. Overall, modern studies 
illustrate the feasibility of dynamic or time-varying IO in specific applications, 
but a unified framework for drift-aware inverse optimization in resource allocation 
remains underdeveloped—precisely the gap this paper addresses.

%-------------------------
\subsection{Dynamic Optimization under Drift}
In parallel to inverse optimization, online convex optimization (OCO) and 
non-stationary learning study performance when losses or environments drift 
over time. The key analytical tools are variation budgets and path length, which 
quantify cumulative change and enable dynamic regret guarantees. Zinkevich 
\citep{zinkevich2003online} introduced the OCO framework and established regret 
bounds in sequential convex decision problems. Besbes, Gur, and Zeevi 
\citep{besbes2015nonstationary} extended this to non-stationary stochastic 
optimization, deriving sharp lower and upper bounds under variation budgets. 
Hall and Willett \citep{hall2015online} further refined these concepts, applying 
path-length–based analysis to signal processing and statistical learning tasks, 
thereby demonstrating the wide applicability of dynamic regret theory.  

Subsequent work has broadened scope and refined guarantees. Chiang et al. 
\citep{chiang2012online} proposed adaptive variation-based regret bounds that 
adjust to the degree of non-stationarity, while Mokhtari et al. 
\citep{mokhtari2016online} developed projection-free conditional gradient methods 
that scale efficiently in high dimensions. Zhang, Chen, and Wang \citep{zhang2018dynamic} 
derived tighter guarantees for smooth and strongly convex losses, providing 
problem-dependent refinements of dynamic regret. More recent studies explore 
connections to reinforcement learning and distributed optimization, suggesting 
that dynamic regret can serve as a unifying principle for adaptation under drift. 
Collectively, this literature demonstrates how dynamic regret provides rigorous 
guarantees in changing environments. Yet all of these studies focus on 
\emph{forward} loss minimization; they do not attempt to recover the hidden, 
time-varying objectives behind observed decisions. Bridging this gap is the 
central aim of our work.

%-------------------------
\subsection{Adjacent Areas: IOC/IRL and Applied Mathematics}
In control and reinforcement learning, inverse optimal control (IOC) and inverse 
reinforcement learning (IRL) study the recovery of cost or reward functions from 
trajectories. Recent extensions allow \emph{time-varying} objectives that adapt 
weights online and analyze stability or uniqueness in control settings 
\citep{westermann2020timevarying, cao2024adaptive}. Neural network–based IOC and 
IRL methods demonstrate feasibility in robotic and biomechanical systems 
\citep{todorov2005generalized, abbeel2004apprenticeship}, while non-stationary 
IRL formulations explicitly account for drifting agent preferences 
\citep{ramponi2020truly}. These approaches emphasize dynamic preference recovery 
but remain tailored to continuous control or policy learning, with guarantees 
framed in terms of stability or sample complexity rather than identifiability or 
regret.  

In applied mathematics, \emph{dynamic inverse problems} have been investigated 
through variational methods. Optimal transport regularization has been used to 
model time-dependent unknowns as measure-valued curves, ensuring stability of 
temporal reconstructions \citep{bredies2020m2an, bredies2022focm}. Extensions 
to imaging and PDE-constrained problems estimate time-varying operators 
alongside states \citep{bungert2021optinv}, and generalized conditional gradient 
algorithms have been proposed for scalable computation with convergence analysis 
\citep{carioni2023dgcg}. While mathematically powerful, these works target 
different forward models than allocation IO and do not provide regret or 
identifiability results in decision-making contexts.

%-------------------------
\subsection{Positioning Relative to This Work}
Across the literature, three themes emerge: (i) inverse optimization largely 
treats static objectives, with online variants analyzing regret only under 
fixed preferences; (ii) online convex optimization under drift provides both static 
and dynamic regret guarantees for forward loss minimization but does not recover 
latent objectives; and (iii) adjacent areas such as inverse optimal control, inverse 
reinforcement learning, and applied mathematics explore time variation in other 
paradigms, but their guarantees focus on stability, sample complexity, or operator 
recovery rather than identifiability or regret in allocation contexts. The 
intersection we address—\emph{recovering time-varying latent preferences in 
allocation models with explicit identifiability conditions and both static and dynamic 
regret guarantees}—has received only limited systematic attention, and our framework 
explicitly bridges IO and non-stationary learning while directly targeting preference 
recovery in allocation settings.

%%======================
\section{Model and Problem Formulation}
\label{sec:model}
%-------------------------
\subsection{Forward Allocation Problem}
\label{subsec:forward}

We formalize the forward allocation problem at each period $t=1,\dots,T$ as
\begin{equation}
\label{eq:forward}
\min_{\mathbf{x} \in \mathcal{X}_t} \; c_t(\mathbf{x}; \boldsymbol{\theta}_t),
\end{equation}
where $\mathcal{X}_t$ denotes the feasible set of allocations and 
$c_t(\cdot;\boldsymbol{\theta}_t)$ is the cost function parameterized by a latent preference vector 
$\boldsymbol{\theta}_t \in \Theta \subset \mathbb{R}^p$.

%--------
\subsubsection{Decision variables and constraints}
At each period $t$, the decision-maker allocates a resource vector 
$\mathbf{x}_t \in \mathbb{R}^n_{+}$ across a set of $n$ agents. 
The feasible set in problem~\eqref{eq:forward} is defined as
\begin{equation}
\label{eq:feasible-set}
\mathcal{X}_t := \left\{ \mathbf{x} \in \mathbb{R}^n_{+} \;\middle|\; 
\mathbf{B}_t \mathbf{x} \leq \mathbf{q}_t \right\},
\end{equation}
where $\mathbf{q}_t \in \mathbb{R}^k_{+}$ represents the available capacity of $k$ different resources, 
and $B_t \in \mathbb{R}^{k \times n}$ is the resource consumption matrix. 
Each row of $\mathbf{B}_t$ corresponds to a specific resource type, each column corresponds to an agent, 
and the entry $[\mathbf{B}_t]_{ij}$ indicates how much of resource $i$ is consumed if agent $j$ 
receives one unit of allocation. 
Thus, the vector $\mathbf{B}_t \mathbf{x}$ gives the total consumption of each resource across all agents, 
and the inequality $\mathbf{B}_t \mathbf{x} \leq \mathbf{q}_t$ ensures that no resource is used beyond its supply.

When $k=1$ and $\mathbf{B}_t = \mathbf{1}^\top$, the model reduces to the classical single-resource constraint 
$\mathbf{1}^\top \mathbf{x} \leq q_t$.
When $k>1$, multiple resource types such as budget, labor, energy, or time can be constrained simultaneously. 
For example, in a healthcare context, $\mathbf{q}_t$ may represent the number of ICU and general hospital beds; 
in logistics, it may capture limits on trucks, drivers, and storage space; 
and in energy systems, it may reflect available capacities of electricity, gas, and hydrogen. 
This general form therefore unifies both single-resource and multi-resource allocation problems 
within a canonical framework.

%--------
\subsubsection{Cost functions}
The objective function in problem~\eqref{eq:forward} is written as
\[
c_t : \mathbb{R}^n \times \Theta \to \mathbb{R}, 
\qquad 
c_t(\mathbf{x};\boldsymbol{\theta}_t).
\]
Here, $\boldsymbol{\theta}_t$ encodes the allocator’s hidden preferences, trade-off weights, 
or cost sensitivities. 
For example, in healthcare allocation $\boldsymbol{\theta}_t$ may capture priority coefficients 
across patient groups; in energy systems it may represent environmental regulation weights 
or risk-aversion parameters; and in education admissions it may reflect the trade-off 
between academic performance and diversity goals.

\begin{assumption}[Error bound / Polyak--Łojasiewicz (PL) inequality]
\label{ass:PL}
For each period $t=1,\dots,T$ and parameter vector $\boldsymbol{\theta}_t \in \Theta$, 
the cost function $c_t$ satisfies the following condition 
\cite{Polyak1963,Lojasiewicz1963,Nesterov2004introductory,LuoTseng1992,Bolte2007,Bubeck2015}: 
there exists a constant $\mu > 0$ such that
\begin{equation}
\label{eq:PL-inequality}
\frac{1}{2\mu}\|\nabla_{\mathbf{x}} c_t(\mathbf{x};\boldsymbol{\theta}_t)\|^2 
\;\ge\; c_t(\mathbf{x};\boldsymbol{\theta}_t) - c_t^\star(\boldsymbol{\theta}_t),
\quad \forall \mathbf{x} \in \mathbb{R}^n_{+}.
\end{equation}
where $c_t^\star(\boldsymbol{\theta}_t) = \min_{\mathbf{y}\in\mathcal{X}_t} c_t(\mathbf{y};\boldsymbol{\theta}_t)$.
\end{assumption}

The PL inequality lies between convexity and strong convexity 
\cite{Bubeck2015}. 
It provides a quantitative link between the optimality gap, the norm of the gradient, 
and the distance to the solution. 
While direct verification of PL conditions can be challenging, 
many allocation-type cost functions such as quadratic congestion costs, 
regularized fairness penalties, or log-barrier utilities are known to satisfy this property 
\cite{LuoTseng1992,Bolte2007}. 
Moreover, recent results show that PL-type conditions can hold even for some nonconvex structures 
\cite{Polyak1963,Lojasiewicz1963,Karimi2016}, making it a practical and broadly applicable assumption. 
This regularity condition relaxes strong convexity while still ensuring stability of solutions 
and linear convergence of first-order methods, and it plays a central role in establishing 
identifiability of the parameters, uniqueness of the forward problem solution, 
and regret guarantees in the dynamic inverse optimization framework developed 
in subsequent sections.

% -------------------------------------
\subsection{Inverse Problem Definition}
\label{subsec:inverse}

The forward allocation model describes how a decision-maker selects 
$\mathbf{x}_t$ given hidden parameters $\boldsymbol{\theta}_t$. 
In practice, however, the analyst does not observe $\boldsymbol{\theta}_t$ directly. 
Instead, only the allocation decisions $\mathbf{x}_t$ and the problem data 
$(\mathbf{B}_t, \mathbf{q}_t)$ are observed. 
The inverse problem is therefore to recover $\boldsymbol{\theta}_t$ from these observations 
under suitable regularity conditions.

% ------------
\subsubsection{Loss functions}
To evaluate how well a candidate parameter vector 
$\boldsymbol{\theta} \in \Theta$ explains the observed allocation $\mathbf{x}_t$, 
we construct a loss function based on the violation of the optimality system 
associated with problem~\eqref{eq:forward}. 
Specifically, consider the Karush--Kuhn--Tucker (KKT) conditions at time $t$:  
(i) primal feasibility, $\mathbf{B}_t \mathbf{x}_t \leq \mathbf{q}_t$;  
(ii) dual feasibility, $\boldsymbol{\lambda}_t \geq \mathbf{0}$; and  
(iii) complementary slackness, 
$\boldsymbol{\lambda}_t^\top(\mathbf{B}_t \mathbf{x}_t - \mathbf{q}_t) = 0$.  
Here, $\nabla_{\mathbf{x}} c_t(\mathbf{x}_t;\boldsymbol{\theta})$ 
denotes the gradient of the cost function, 
and $\boldsymbol{\lambda}_t \in \mathbb{R}^k_{+}$ are the Lagrange multipliers. 
The optimality system requires that
\begin{equation}
\label{eq:KKT-system}
\nabla_{\mathbf{x}} c_t(\mathbf{x}_t;\boldsymbol{\theta}) 
+ \mathbf{B}_t^\top \boldsymbol{\lambda}_t = \mathbf{0}, 
\qquad \boldsymbol{\lambda}_t \geq \mathbf{0}, 
\qquad \mathbf{B}_t \mathbf{x}_t \leq \mathbf{q}_t,
\qquad \boldsymbol{\lambda}_t^\top (\mathbf{B}_t \mathbf{x}_t - \mathbf{q}_t) = 0.
\end{equation}

\begin{assumption}[KKT regularity]\label{ass:KKT}
For each $t$, the observed allocation $\mathbf{x}_t$ together with the problem data 
$(\mathbf{B}_t,\mathbf{q}_t)$ admits some latent parameter 
$\boldsymbol{\theta}_t \in \Theta$ and multipliers 
$\boldsymbol{\lambda}_t \in \mathbb{R}^k_{+}$ such that the KKT system 
\eqref{eq:KKT-system} holds exactly.
\end{assumption}

Given observed $\mathbf{x}_t$, these conditions may not hold exactly for an arbitrary 
$\boldsymbol{\theta}$. 
We therefore define the loss at time $t$ as the magnitude of violation:
\begin{equation}
\label{eq:inverse-loss}
\begin{aligned}
\ell_t(\boldsymbol{\theta}) := \;&
\underbrace{\| (\mathbf{B}_t \mathbf{x}_t - \mathbf{q}_t)_{+}\|^2}_{\text{primal feasibility gap}} \\
&+ \underbrace{\inf_{\boldsymbol{\lambda}_t \geq \mathbf{0}} 
\big\| \nabla_{\mathbf{x}} c_t(\mathbf{x}_t;\boldsymbol{\theta}) 
+ \mathbf{B}_t^\top \boldsymbol{\lambda}_t \big\|^2}_{\text{dual feasibility gap}}
+ \underbrace{\big| \boldsymbol{\lambda}_t^\top 
(\mathbf{B}_t \mathbf{x}_t - \mathbf{q}_t) \big|}_{\text{complementarity gap}}.
\end{aligned}
\end{equation}

This generalized optimality-gap loss jointly penalizes violations of 
primal feasibility, dual feasibility, and complementarity. 
Importantly, Assumptions~\ref{ass:PL} and \ref{ass:KKT} ensure that this loss is not merely heuristic 
but theoretically grounded: the PL inequality guarantees that small violations 
of the optimality system imply a small optimality gap, while KKT regularity ensures 
that observed allocations are consistent with some true parameter vector. 
Thus, the error bound property rigorously links allocation discrepancies 
to the forward problem’s solution quality, providing a solid foundation 
for inverse estimation.

% ------------
\subsubsection{Objective of parameter recovery}
Aggregating over $T$ periods, the inverse optimization estimator is defined as
\begin{equation}
\label{eq:inverse-estimator}
\hat{\boldsymbol{\theta}} \in 
\arg\min_{\boldsymbol{\theta} \in \Theta} 
\; \sum_{t=1}^T \ell_t(\boldsymbol{\theta}).
\end{equation}

This formulation directly ties parameter recovery to observed allocation data. 
By minimizing the cumulative violation of the KKT system, the estimator 
$\hat{\boldsymbol{\theta}}$ recovers the latent preferences or trade-offs 
that best rationalize the decision-maker’s observed choices. 
Together with Assumption~\ref{ass:PL}, this framework establishes 
consistency between the forward model and the inverse estimator, 
and it sets the stage for the identifiability and regret analysis 
developed in the next section.

% -------------------------------------
\subsection{Dynamic Drift}
\label{subsec:drift}

\subsubsection{Variation budget}
To capture temporal nonstationarity, we define the \emph{variation budget}
\begin{equation}
\label{eq:variation-budget}
V_T := \sum_{t=2}^T d(\boldsymbol{\theta}_t, \boldsymbol{\theta}_{t-1}),
\end{equation}
where $d(\cdot,\cdot)$ is a metric on $\Theta$, which may be taken as the Euclidean norm,
the $\ell_1$ norm, a Wasserstein distance, or more generally a Bregman divergence.
This general formulation allows us to analyze a wide class of nonstationary environments.

We assume that $V_T$ is bounded in growth relative to the horizon length:
\[
V_T = O(T^\alpha), \qquad \alpha < 1.
\]
This condition, standard in dynamic regret analysis \cite{Hazan2009,Besbes2015},
implies that the cumulative drift grows sublinearly with time, 
so the environment evolves but does not change arbitrarily fast.
When $V_T=0$, the setting reduces to the stationary case. 
When $V_T$ is small but positive, the environment exhibits gradual drift;
large but infrequent changes can be modeled as jump shocks, which we separate from smooth drift
by writing $V_T = V_T^{\text{smooth}} + V_T^{\text{shock}}$.

\subsubsection{Behavioral interpretation}
The variation budget $V_T$ can be interpreted as a ``budget of change''
in the decision-maker’s latent preferences. 
For example, in energy allocation, $\boldsymbol{\theta}_t$ may shift due to sudden regulatory interventions 
(jump shocks) combined with gradual shifts in sustainability priorities (smooth drift). 
In healthcare, $\boldsymbol{\theta}_t$ may undergo abrupt changes at the onset of a pandemic 
followed by gradual adaptation to evolving patient demand. 
In logistics and supply chains, $V_T$ may capture seasonal fluctuations with predictable smooth variation, 
interrupted by occasional shocks such as strikes or supply disruptions.
This decomposition emphasizes that the variation budget is not only a mathematical device 
but also an interpretable measure of how preferences evolve in realistic dynamic environments.

%%======================
\section{Theoretical Results}
\label{sec:theory}

Before presenting detailed results, we briefly group the assumptions used 
throughout this section. 
\emph{Structural assumptions} concern the geometry of the forward model 
(Polyak–Łojasiewicz inequality, KKT regularity, and projected-gradient injectivity).  
\emph{Regularity assumptions} ensure well-posedness (level-boundedness, compactness, 
and quadratic growth).  
\emph{Statistical assumptions} capture data properties (bounded subgradients, drift 
conditions, and noise models).  
This classification helps clarify which results rely on structural identifiability, 
which on optimization regularity, and which on stochastic robustness.

Table~\ref{tab:notation} summarizes the main notation used in the theoretical analysis. 
We group symbols into decision variables, model parameters, operators, and statistical quantities 
to facilitate readability and avoid ambiguity in later results.

\begin{table}[h]
\caption{Summary of notation used in the theoretical analysis}
\label{tab:notation}
\centering
\begin{tabular*}{\textwidth}{@{\extracolsep\fill}ll}
\toprule
\textbf{Symbol} & \textbf{Description} \\
\midrule
\multicolumn{2}{l}{\textbf{Decision variables}} \\
\hspace{1.5em}$\mathbf{x}_t \in \mathbb{R}^n_{+}$ & Allocation decision vector at period $t$ \\
\hspace{1.5em}$\hat{\boldsymbol{\theta}}_t$ & Estimated parameter at period $t$ \\[6pt]

\multicolumn{2}{l}{\textbf{Parameters}} \\
\hspace{1.5em}$\boldsymbol{\theta}_t \in \Theta \subset \mathbb{R}^p$ & Latent preference parameter at period $t$ \\
\hspace{1.5em}$c_t(\mathbf{x};\boldsymbol{\theta})$ & Cost function at period $t$ \\
\hspace{1.5em}$\mathbf{B}_t \in \mathbb{R}^{k \times n}$ & Constraint matrix at period $t$ \\
\hspace{1.5em}$\mathbf{q}_t \in \mathbb{R}^k$ & Capacity/right-hand side vector at period $t$ \\
\hspace{1.5em}$\mathbf{A}_t(\mathbf{x}) \in \mathbb{R}^{n \times p}$ & Jacobian matrix of $\nabla_{\mathbf{x}} c_t(\mathbf{x};\boldsymbol{\theta})$ wrt $\boldsymbol{\theta}$ \\
\hspace{1.5em}$\mathbf{b}_t(\mathbf{x}) \in \mathbb{R}^n$ & Offset vector in linear gradient representation \\
\hspace{1.5em}$\Theta$ & Feasible parameter set \\[6pt]

\multicolumn{2}{l}{\textbf{Dual variables}} \\
\hspace{1.5em}$\boldsymbol{\lambda}_t \in \mathbb{R}^k_{+}$ & Lagrange multipliers at period $t$ \\[6pt]

\multicolumn{2}{l}{\textbf{Operators and functions}} \\
\hspace{1.5em}$\ell_t(\boldsymbol{\theta})$ & Inverse loss at period $t$ \\
\hspace{1.5em}$\Proj_t$ & Orthogonal projector onto $\ker(\mathbf{B}_t)$ \\
\hspace{1.5em}$\mathbf{I}$ & Identity matrix \\
\hspace{1.5em}$\mathbf{1}$ & All-ones vector \\
\hspace{1.5em}$\Pi_{\mathcal{K}}(\mathbf{v})$ & Projection of $\mathbf{v}$ onto convex cone $\mathcal{K}$ \\
\hspace{1.5em}$d(\cdot,\cdot)$ & Metric/divergence on $\Theta$ \\
\hspace{1.5em}$\langle \mathbf{u},\mathbf{v}\rangle$ & Euclidean inner product $\mathbf{u}^\top \mathbf{v}$ \\
\hspace{1.5em}$\|\mathbf{u}\|$ & Euclidean norm of vector $\mathbf{u}$ \\[6pt]

\multicolumn{2}{l}{\textbf{Statistical/structural quantities}} \\
\hspace{1.5em}$V_T$ & Variation budget $\sum_{t=2}^T \|\boldsymbol{\theta}_t-\boldsymbol{\theta}_{t-1}\|$ \\
\hspace{1.5em}$\mu$ & PL inequality constant \\
\hspace{1.5em}$\sigma^2$ & Variance proxy of sub-Gaussian noise \\
\hspace{1.5em}$\kappa$ & Strong identifiability modulus \\
\hspace{1.5em}$G$ & Uniform subgradient bound \\
\bottomrule
\end{tabular*}
\end{table}

\begin{sidewaystable}[htbp]
\centering
\caption{Dependency of assumptions across main theoretical results, grouped by subsection}
\label{tab:assumption-dependency-vertical}
\renewcommand{\arraystretch}{1.25}
\setlength{\tabcolsep}{4pt}
\begin{tabular}{lcccccccccccc l}
\toprule
\textbf{Assumption} 
& \rotatebox{80}{Lemma~\ref{lem:proj-properties}} 
& \rotatebox{80}{Lemma~\ref{lem:kkt-equivalence}} 
& \rotatebox{80}{Lemma~\ref{lem:proj-consistency}} 
& \rotatebox{80}{Thm.~\ref{thm:ident}} 
& \rotatebox{80}{Cor.~\ref{cor:stationary}} 
& \rotatebox{80}{Prop.~\ref{prop:consistency-implies-stationarity}} 
& \rotatebox{80}{Thm.~\ref{thm:linear-ident}} 
& \rotatebox{80}{Cor.~\ref{cor:rank}} 
& \rotatebox{80}{Thm.~\ref{thm:exist-unique}} 
& \rotatebox{80}{Thm.~\ref{thm:static-regret}} 
& \rotatebox{80}{Thm.~\ref{thm:dynamic-regret}} 
& \rotatebox{80}{Thm.~\ref{thm:noise}} 
& \textbf{Notes } \\
\midrule
\multicolumn{14}{l}{\textbf{Identifiability (Section~\ref{subsec:identifiability})}} \\
Assump.~1 (PL inequality) &  & $\checkmark$ & $\checkmark$ & $\checkmark$ & $\checkmark$ & $\checkmark$ & $\checkmark$ & $\checkmark$ &  & $\checkmark$ & $\checkmark$ & $\checkmark$ & Identifiability, Regret, Noise \\
Assump.~2 (KKT regularity) &  & $\checkmark$ & $\checkmark$ & $\checkmark$ & $\checkmark$ & $\checkmark$ & $\checkmark$ & $\checkmark$ & $\checkmark$ & $\checkmark$ & $\checkmark$ & $\checkmark$ & Structural feasibility assumption \\
Assump.~3 (Injectivity) &  &  &  & $\checkmark$ & $\checkmark$ & $\checkmark$ & $\checkmark$ & $\checkmark$ &  &  &  & $\checkmark$ & Identifiability, Noise \\
\midrule
\multicolumn{14}{l}{\textbf{Existence and Uniqueness (Section~\ref{subsec:existence})}} \\
Assump.~4 (Level-boundedness / Compactness) &  &  &  &  &  &  &  &  & $\checkmark$ &  &  &  & Existence \\
Assump.~5 (Quadratic growth) &  &  &  &  &  &  &  &  & $\checkmark$ &  &  &  & Existence \\
Assump.~6 (Continuity + Compactness for inverse loss) &  &  &  &  &  &  &  &  & $\checkmark$ &  &  &  & Existence \\
\midrule
\multicolumn{14}{l}{\textbf{Static and Dynamic Regret (Section~\ref{subsec:regret})}} \\
Assump.~7 (Uniform subgradients) &  &  &  &  &  &  &  &  &  & $\checkmark$ & $\checkmark$ &  & Regret \\
Assump.~8 (Drift sq-summability) &  &  &  &  &  &  &  &  &  &  & $\checkmark$ &  & Regret \\
\midrule
\multicolumn{14}{l}{\textbf{Noise Robustness (Section~\ref{subsec:noise})}} \\
Assump.~9 (Noise smoothness + Strong identifiability) &  &  &  &  &  &  &  &  &  &  &  & $\checkmark$ & Noise Robustness \\
\bottomrule
\end{tabular}
\end{sidewaystable}

% -------------------------------------
\subsection{Identifiability}
\label{subsec:identifiability}

We first establish conditions under which the latent preference parameters 
$\theta_t$ are uniquely identifiable from observed allocation data. 
Throughout, we rely on \emph{structural assumptions} introduced in 
Section~\ref{sec:theory}, namely the PL condition, KKT regularity, 
and projected-gradient injectivity.

\paragraph{Preliminaries.}
$\mathbf{A}^{\dagger}$: Moore--Penrose pseudoinverse.  
$\mathsf{range}(\mathbf{A})$, $\mathsf{ker}(\mathbf{A})$: range and nullspace.  
$\Pi_{\mathcal{K}}(\mathbf{v})$: Euclidean projection of $\mathbf{v}$ onto a closed convex cone $\mathcal{K}$; single-valued, nonexpansive, and $\mathbf{v}-\Pi_{\mathcal{K}}(\mathbf{v})$ orthogonal to $\mathcal{K}$.  
$\langle \mathbf{u},\mathbf{v}\rangle = \mathbf{u}^\top \mathbf{v}$, $\|\mathbf{u}\|$: Euclidean norm.

\begin{lemma}[Projector properties]\label{lem:proj-properties}
Let $\mathbf{B}_t\in\mathbb{R}^{k\times n}$ and define
\begin{equation}
\label{eq:projector}
\Proj_t := I - \mathbf{B}_t^\top\,(\mathbf{B}_t \mathbf{B}_t^\top)^{\dagger} \mathbf{B}_t \in \mathbb{R}^{n\times n}.
\end{equation}
Then $\Proj_t$ is (i) symmetric, (ii) idempotent, and (iii) $\Proj_t \mathbf{B}_t^\top = 0$. 
In particular, $\Proj_t$ is the orthogonal projector onto $\mathsf{ker}(\mathbf{B}_t)$ 
along $\mathsf{range}(\mathbf{B}_t^\top)$.
\end{lemma}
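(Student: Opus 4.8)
The plan is to verify the three stated algebraic properties directly from the definition $\Proj_t = I - \mathbf{B}_t^\top (\mathbf{B}_t \mathbf{B}_t^\top)^{\dagger} \mathbf{B}_t$, and then deduce the geometric characterization. First I would record the two Moore--Penrose identities I will lean on: writing $\mathbf{M} := \mathbf{B}_t \mathbf{B}_t^\top \in \mathbb{R}^{k\times k}$, the pseudoinverse $\mathbf{M}^{\dagger}$ is symmetric (since $\mathbf{M}$ is symmetric PSD) and satisfies $\mathbf{M} \mathbf{M}^{\dagger} \mathbf{M} = \mathbf{M}$. I would also note the standard fact $\mathbf{B}_t^\top (\mathbf{B}_t \mathbf{B}_t^\top)^{\dagger} \mathbf{B}_t \mathbf{B}_t^\top = \mathbf{B}_t^\top$, which follows because $\mathsf{range}(\mathbf{B}_t^\top) = \mathsf{range}(\mathbf{B}_t^\top \mathbf{B}_t) = \mathsf{range}(\mathbf{M}^{\dagger}\mathbf{M})$-type reasoning; more cleanly, it is equivalent to $\mathbf{B}_t \mathbf{M}^{\dagger} \mathbf{M} = \mathbf{B}_t$, which holds since $\mathbf{M}^{\dagger}\mathbf{M}$ is the orthogonal projector onto $\mathsf{range}(\mathbf{M}) = \mathsf{range}(\mathbf{B}_t)$ and the rows of $\mathbf{B}_t$ lie in that range.

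For symmetry (i), I would transpose: $\Proj_t^\top = I - \mathbf{B}_t^\top (\mathbf{M}^{\dagger})^\top \mathbf{B}_t = I - \mathbf{B}_t^\top \mathbf{M}^{\dagger} \mathbf{B}_t = \Proj_t$, using symmetry of $\mathbf{M}^{\dagger}$. For property (iii), I would compute $\Proj_t \mathbf{B}_t^\top = \mathbf{B}_t^\top - \mathbf{B}_t^\top \mathbf{M}^{\dagger} \mathbf{B}_t \mathbf{B}_t^\top = \mathbf{B}_t^\top - \mathbf{B}_t^\top \mathbf{M}^{\dagger} \mathbf{M} = \mathbf{B}_t^\top - \mathbf{B}_t^\top = 0$, invoking the identity $\mathbf{B}_t^\top \mathbf{M}^{\dagger} \mathbf{M} = \mathbf{B}_t^\top$ just discussed. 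For idempotence (ii), I would expand $\Proj_t^2 = \Proj_t(I - \mathbf{B}_t^\top \mathbf{M}^{\dagger} \mathbf{B}_t) = \Proj_t - (\Proj_t \mathbf{B}_t^\top) \mathbf{M}^{\dagger} \mathbf{B}_t = \Proj_t - 0 = \Proj_t$, where the middle term vanishes by (iii).

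Finally, for the geometric statement, since $\Proj_t$ is symmetric and idempotent it is an orthogonal projector onto its range. By (iii), $\mathsf{range}(\mathbf{B}_t^\top) \subseteq \mathsf{ker}(\Proj_t)$; conversely a rank/dimension count (or noting $I - \Proj_t = \mathbf{B}_t^\top \mathbf{M}^{\dagger}\mathbf{B}_t$ has range inside $\mathsf{range}(\mathbf{B}_t^\top)$) gives equality, so $\Proj_t$ annihilates exactly $\mathsf{range}(\mathbf{B}_t^\top)$ and acts as identity on its orthogonal complement $\mathsf{ker}(\mathbf{B}_t)$ (using $\mathbb{R}^n = \mathsf{ker}(\mathbf{B}_t) \oplus \mathsf{range}(\mathbf{B}_t^\top)$). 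I expect the only subtle point to be the pseudoinverse identity $\mathbf{B}_t^\top \mathbf{M}^{\dagger} \mathbf{M} = \mathbf{B}_t^\top$, which is where the Moore--Penrose axioms genuinely enter; everything else is routine transposition and substitution. If $\mathbf{B}_t$ has full row rank this simplifies to the classical formula with $\mathbf{M}^{\dagger} = \mathbf{M}^{-1}$, but the pseudoinverse version is needed for the general (possibly rank-deficient) $\mathbf{B}_t$ allowed by the model.
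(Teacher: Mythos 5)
Your proposal is correct and follows essentially the same route as the paper: direct algebraic verification of symmetry, idempotence, and annihilation of $\mathbf{B}_t^\top$ from the Moore--Penrose identities, with the only (harmless) difference that you prove (iii) first and deduce (ii) from it, whereas the paper expands $\Proj_t^2$ directly using $\mathbf{M}^\dagger\mathbf{M}\mathbf{M}^\dagger=\mathbf{M}^\dagger$. One small slip: the auxiliary identity you cite should read $\mathbf{M}\mathbf{M}^\dagger\mathbf{B}_t=\mathbf{B}_t$ (equivalently $\mathbf{B}_t^\top\mathbf{M}^\dagger\mathbf{M}=\mathbf{B}_t^\top$), justified by the \emph{columns} of $\mathbf{B}_t$ lying in $\mathsf{range}(\mathbf{M})$; as written, $\mathbf{B}_t\mathbf{M}^\dagger\mathbf{M}$ does not typecheck since $\mathbf{M}$ is $k\times k$ while $\mathbf{B}_t$ is $k\times n$.
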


\begin{proof}
(i) \emph{Symmetry.} Since $\mathbf{B}_t \mathbf{B}_t^\top$ is symmetric positive semidefinite, its Moore–Penrose pseudoinverse $(\mathbf{B}_t \mathbf{B}_t^\top)^\dagger$ is symmetric. Thus
\[
\Proj_t^\top = I - \mathbf{B}_t^\top \big((\mathbf{B}_t \mathbf{B}_t^\top)^\dagger\big)^\top \mathbf{B}_t 
= I - \mathbf{B}_t^\top (\mathbf{B}_t \mathbf{B}_t^\top)^\dagger \mathbf{B}_t = \Proj_t.
\]

(ii) \emph{Idempotence.} Expanding gives
\[
\Proj_t^2 = I - 2\mathbf{B}_t^\top (\mathbf{B}_t \mathbf{B}_t^\top)^\dagger \mathbf{B}_t 
+ \mathbf{B}_t^\top (\mathbf{B}_t \mathbf{B}_t^\top)^\dagger (\mathbf{B}_t \mathbf{B}_t^\top) (\mathbf{B}_t \mathbf{B}_t^\top)^\dagger \mathbf{B}_t.
\]
The defining property of the pseudoinverse ensures
$(\mathbf{B}_t \mathbf{B}_t^\top)^\dagger (\mathbf{B}_t \mathbf{B}_t^\top) (\mathbf{B}_t \mathbf{B}_t^\top)^\dagger = (\mathbf{B}_t \mathbf{B}_t^\top)^\dagger$.  
Substituting, the last term reduces to $\mathbf{B}_t^\top (\mathbf{B}_t \mathbf{B}_t^\top)^\dagger \mathbf{B}_t$, so $\Proj_t^2=\Proj_t$.

(iii) \emph{Annihilation of $\mathbf{B}_t^\top$.} We compute
\[
\Proj_t \mathbf{B}_t^\top = \mathbf{B}_t^\top - \mathbf{B}_t^\top (\mathbf{B}_t \mathbf{B}_t^\top)^\dagger \mathbf{B}_t \mathbf{B}_t^\top.
\]
Now $(\mathbf{B}_t \mathbf{B}_t^\top)^\dagger (\mathbf{B}_t \mathbf{B}_t^\top)$ is the orthogonal projector onto $\mathsf{range}(\mathbf{B}_t)$, so the product equals $\mathbf{B}_t^\top$. Hence $\Proj_t \mathbf{B}_t^\top=0$.

Combining (i)–(iii), $\Proj_t$ is a symmetric idempotent matrix annihilating $\mathsf{range}(\mathbf{B}_t^\top)$, which by definition is the orthogonal projector onto $\mathsf{ker}(\mathbf{B}_t)$ along $\mathsf{range}(\mathbf{B}_t^\top)$.
\end{proof}

\begin{lemma}[KKT consistency from zero loss]\label{lem:kkt-equivalence}
Fix $t$ and suppose Assumptions~\ref{ass:PL} (Structural) hold. 
Recall the generalized inverse loss $\ell_t(\boldsymbol{\theta})$ 
defined in \eqref{eq:inverse-loss}.  
Then $\ell_t(\boldsymbol{\theta})=0$ if and only if there exists 
$\boldsymbol{\lambda}_t(\boldsymbol{\theta}) \ge 0$ satisfying the KKT system 
\eqref{eq:KKT-system}.
\end{lemma}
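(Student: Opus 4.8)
The plan is to use the fact that $\ell_t$ in \eqref{eq:inverse-loss} is a sum of three nonnegative blocks, so that $\ell_t(\boldsymbol{\theta})=0$ forces each block to vanish separately; the four relations of the KKT system \eqref{eq:KKT-system} can then be read off one at a time, and the reverse direction follows by direct substitution of the KKT multiplier. Throughout I would only use elementary convex-analytic facts: nonnegativity of squared norms and of the positive part, and closedness of finitely generated convex cones.

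For the ``if'' direction, suppose $\boldsymbol{\lambda}_t(\boldsymbol{\theta})\ge\mathbf 0$ satisfies \eqref{eq:KKT-system}. Primal feasibility $\mathbf{B}_t\mathbf{x}_t\le\mathbf{q}_t$ gives $(\mathbf{B}_t\mathbf{x}_t-\mathbf{q}_t)_+=\mathbf 0$, so the primal-feasibility term is zero; stationarity $\nabla_{\mathbf x}c_t(\mathbf{x}_t;\boldsymbol{\theta})+\mathbf{B}_t^\top\boldsymbol{\lambda}_t(\boldsymbol{\theta})=\mathbf 0$ means the dual-feasibility infimum is attained at $\boldsymbol{\lambda}_t(\boldsymbol{\theta})$ with value $0$ (the infimum is nonnegative and bounded above by this value); and complementary slackness makes the last term zero. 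Hence $\ell_t(\boldsymbol{\theta})=0$.

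For the ``only if'' direction, assume $\ell_t(\boldsymbol{\theta})=0$. Vanishing of the first block gives $\mathbf{B}_t\mathbf{x}_t\le\mathbf{q}_t$ (primal feasibility). Vanishing of the second block gives $\inf_{\boldsymbol{\lambda}\ge\mathbf 0}\|\nabla_{\mathbf x}c_t(\mathbf{x}_t;\boldsymbol{\theta})+\mathbf{B}_t^\top\boldsymbol{\lambda}\|^2=0$; here the key observation is that $\mathcal C:=\{\mathbf{B}_t^\top\boldsymbol{\lambda}:\boldsymbol{\lambda}\ge\mathbf 0\}$ is the conical hull of the (finitely many) rows of $\mathbf{B}_t$, hence a closed polyhedral cone, so this infimum equals $\mathrm{dist}(-\nabla_{\mathbf x}c_t(\mathbf{x}_t;\boldsymbol{\theta}),\mathcal C)^2$, and its being zero yields $-\nabla_{\mathbf x}c_t(\mathbf{x}_t;\boldsymbol{\theta})\in\mathcal C$, i.e. there exists $\boldsymbol{\lambda}_t(\boldsymbol{\theta})\ge\mathbf 0$ with $\nabla_{\mathbf x}c_t(\mathbf{x}_t;\boldsymbol{\theta})+\mathbf{B}_t^\top\boldsymbol{\lambda}_t(\boldsymbol{\theta})=\mathbf 0$ (exact stationarity), which also delivers dual feasibility $\boldsymbol{\lambda}_t(\boldsymbol{\theta})\ge\mathbf 0$. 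Finally, the complementarity term evaluated at this multiplier forces $\boldsymbol{\lambda}_t(\boldsymbol{\theta})^\top(\mathbf{B}_t\mathbf{x}_t-\mathbf{q}_t)=0$. Collecting these relations gives \eqref{eq:KKT-system}.

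I expect the one genuinely nonroutine point to be the passage from ``the infimum is zero'' to ``a multiplier achieving exact stationarity exists,'' which rests on closedness of the polyhedral cone $\{\mathbf{B}_t^\top\boldsymbol{\lambda}:\boldsymbol{\lambda}\ge\mathbf 0\}$ (equivalently, that a convex quadratic attains its infimum over a polyhedron when that infimum is finite). A secondary bookkeeping point is to make the complementarity penalty refer to the \emph{same} $\boldsymbol{\lambda}$ that realizes the dual-feasibility infimum; the cleanest way to handle this is to read the second and third blocks of \eqref{eq:inverse-loss} as jointly minimized over a single $\boldsymbol{\lambda}\ge\mathbf 0$, which is the interpretation under which the stated equivalence is exact, and I would state this convention explicitly. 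I would also remark that Assumption~\ref{ass:PL} is not needed for the set-theoretic equivalence itself but is precisely what guarantees that the recovered KKT point is a global minimizer of the forward problem \eqref{eq:forward}, so that $\ell_t$ is a faithful surrogate for optimality rather than merely for stationarity.
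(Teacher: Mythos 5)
Your proof is correct and follows essentially the same route as the paper's: both directions proceed by observing that the loss is a sum of three nonnegative blocks, so zero loss forces each block to vanish and the KKT relations can be read off, with the converse by direct substitution. In fact your version is more careful than the paper's own terse argument --- the paper silently passes from ``the dual-feasibility infimum is zero'' to ``a multiplier attaining exact stationarity exists,'' which genuinely requires the closedness of the polyhedral cone $\{\mathbf{B}_t^\top\boldsymbol{\lambda}:\boldsymbol{\lambda}\ge\mathbf{0}\}$ that you supply, and your explicit convention that the complementarity penalty is evaluated at the same $\boldsymbol{\lambda}$ realizing that infimum resolves an ambiguity in \eqref{eq:inverse-loss} that the paper leaves unaddressed; your remark that Assumption~\ref{ass:PL} is not needed for the equivalence itself is also accurate.
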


\begin{proof}
($\Rightarrow$) Suppose $\ell_t(\boldsymbol{\theta})=0$.  
By definition of the loss \eqref{eq:inverse-loss}, the primal feasibility gap, dual feasibility gap, and complementarity gap must all vanish.  
Hence there exists $\boldsymbol{\lambda}_t(\boldsymbol{\theta}) \ge 0$ such that
\[
\nabla_{\mathbf{x}} c_t(\mathbf{x}_t;\boldsymbol{\theta}) + \mathbf{B}_t^\top \boldsymbol{\lambda}_t(\boldsymbol{\theta}) = 0, 
\quad \mathbf{B}_t \mathbf{x}_t \le \mathbf{q}_t, 
\quad \boldsymbol{\lambda}_t(\boldsymbol{\theta})^\top (\mathbf{B}_t \mathbf{x}_t - \mathbf{q}_t) = 0,
\]
which is exactly the KKT system.

($\Leftarrow$) Conversely, if there exists $\boldsymbol{\lambda}_t(\boldsymbol{\theta}) \ge 0$ such that the KKT system holds, then all three gaps are zero by construction.  
Thus $\ell_t(\boldsymbol{\theta})=0$.
\end{proof}

\begin{lemma}[Zero loss implies projected-gradient consistency]\label{lem:proj-consistency}
Under Assumptions~\ref{ass:PL} and \ref{lem:kkt-equivalence}, if $\ell_t(\boldsymbol{\theta})=0$ then
\[
\Proj_t\, \nabla_{\mathbf{x}} c_t(\mathbf{x}_t;\boldsymbol{\theta}) = 0, 
\qquad \Proj_t\, \nabla_{\mathbf{x}} c_t(\mathbf{x}_t;\boldsymbol{\theta}_t) = 0.
\]
\end{lemma}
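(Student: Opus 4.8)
The plan is to reduce both claims to a single algebraic fact: a gradient that balances the stationarity equation of the KKT system necessarily lies in $\mathsf{range}(\mathbf{B}_t^\top)$, and such a vector is annihilated by $\Proj_t$ thanks to property (iii) of Lemma~\ref{lem:proj-properties}, $\Proj_t\mathbf{B}_t^\top=0$.

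For the first identity I start from $\ell_t(\boldsymbol{\theta})=0$ and apply the ($\Rightarrow$) direction of Lemma~\ref{lem:kkt-equivalence}: there exists a multiplier $\boldsymbol{\lambda}_t(\boldsymbol{\theta})\ge 0$ with $\nabla_{\mathbf{x}}c_t(\mathbf{x}_t;\boldsymbol{\theta}) = -\mathbf{B}_t^\top\boldsymbol{\lambda}_t(\boldsymbol{\theta})$. Left-multiplying by $\Proj_t$ and using $\Proj_t\mathbf{B}_t^\top=0$ immediately yields $\Proj_t\nabla_{\mathbf{x}}c_t(\mathbf{x}_t;\boldsymbol{\theta})=0$. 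This half uses nothing beyond Lemmas~\ref{lem:proj-properties} and \ref{lem:kkt-equivalence}.

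For the second identity the additional ingredient is KKT regularity (Assumption~\ref{ass:KKT}), which guarantees that the observed allocation $\mathbf{x}_t$ is rationalized by the true parameter $\boldsymbol{\theta}_t$, i.e. there exist $\boldsymbol{\lambda}_t\ge 0$ for which the system \eqref{eq:KKT-system} holds exactly at $\boldsymbol{\theta}_t$. By the ($\Leftarrow$) direction of Lemma~\ref{lem:kkt-equivalence} this gives $\ell_t(\boldsymbol{\theta}_t)=0$, so I can rerun the argument of the previous paragraph verbatim, with $\boldsymbol{\lambda}_t$ in place of $\boldsymbol{\lambda}_t(\boldsymbol{\theta})$, to get $\Proj_t\nabla_{\mathbf{x}}c_t(\mathbf{x}_t;\boldsymbol{\theta}_t)=0$.

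I do not anticipate a genuine obstacle; the only point needing care is bookkeeping about which hypothesis does what — the $\boldsymbol{\theta}$-statement is a direct consequence of zero loss alone, whereas the $\boldsymbol{\theta}_t$-statement genuinely relies on KKT regularity to know that $\boldsymbol{\theta}_t$ itself has zero loss. I would close with a one-line remark explaining why this lemma is the linchpin of the identifiability analysis: it shows that \emph{every} zero-loss parameter satisfies the same relation $\Proj_t\nabla_{\mathbf{x}}c_t(\mathbf{x}_t;\cdot)=0$, so that in the linear-gradient representation $\nabla_{\mathbf{x}}c_t(\mathbf{x}_t;\boldsymbol{\theta})=\mathbf{A}_t(\mathbf{x}_t)\boldsymbol{\theta}+\mathbf{b}_t(\mathbf{x}_t)$ one is left with the affine system $\Proj_t\mathbf{A}_t(\mathbf{x}_t)\boldsymbol{\theta}=-\Proj_t\mathbf{b}_t(\mathbf{x}_t)$, whose solution is unique precisely under the projected-Jacobian injectivity assumption invoked in Theorem~\ref{thm:ident} — forcing $\boldsymbol{\theta}=\boldsymbol{\theta}_t$.
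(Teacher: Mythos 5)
Your proof is correct and follows essentially the same route as the paper's: invoke Lemma~\ref{lem:kkt-equivalence} to obtain a multiplier satisfying the stationarity equation, left-multiply by $\Proj_t$, and use $\Proj_t\mathbf{B}_t^\top=0$ from Lemma~\ref{lem:proj-properties}. Your treatment of the second identity is in fact slightly more careful than the paper's, which dispatches it with ``an identical argument'' while leaving implicit that KKT regularity (Assumption~\ref{ass:KKT}) is what guarantees the system holds at the true parameter $\boldsymbol{\theta}_t$ in the first place.
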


\begin{proof}
If $\ell_t(\boldsymbol{\theta})=0$, then by Lemma~\ref{lem:kkt-equivalence} there exists $\boldsymbol{\lambda}_t(\boldsymbol{\theta}) \ge 0$ such that
\[
\nabla_{\mathbf{x}} c_t(\mathbf{x}_t;\boldsymbol{\theta}) + \mathbf{B}_t^\top \boldsymbol{\lambda}_t(\boldsymbol{\theta}) = 0.
\]
Multiplying both sides by $\Proj_t$ and using $\Proj_t \mathbf{B}_t^\top = 0$ from Lemma~\ref{lem:proj-properties} gives
\[
\Proj_t \, \nabla_{\mathbf{x}} c_t(\mathbf{x}_t;\boldsymbol{\theta}) = 0.
\]
An identical argument applied with the true parameter $\boldsymbol{\theta}_t$ shows
\[
\Proj_t \, \nabla_{\mathbf{x}} c_t(\mathbf{x}_t;\boldsymbol{\theta}_t) = 0.
\]
\end{proof}

\begin{assumption}[Projected gradient injectivity (Structural)]\label{ass:inj}
For each $t$, the mapping $\boldsymbol{\theta} \mapsto \mathbf{P}_t\, \nabla_{\mathbf{x}} c_t(\mathbf{x}_t;\boldsymbol{\theta})$ 
is injective on $\Theta$. 
\end{assumption}

\begin{theorem}[Pointwise identifiability]\label{thm:ident}
Suppose Assumptions~\ref{ass:PL}, \ref{ass:KKT}, and \ref{ass:inj} hold, 
and $\mathbf{x}_t$ is observed without noise. Then
\[
\ell_t(\boldsymbol{\theta})=0 \;\;\Longrightarrow\;\; \boldsymbol{\theta}=\boldsymbol{\theta}_t.
\]
\emph{Thus, exact fit of the inverse loss uniquely recovers the true parameter 
at time $t$.}
\end{theorem}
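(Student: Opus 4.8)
The plan is to chain together the three preceding lemmas and then invoke the injectivity assumption to collapse the two candidate parameters onto one another. The argument is short because Lemmas~\ref{lem:kkt-equivalence}--\ref{lem:proj-consistency} already do the structural work; what remains is to recognize that both the candidate $\boldsymbol{\theta}$ and the true $\boldsymbol{\theta}_t$ land on the same point under the projected-gradient map $\boldsymbol{\theta}\mapsto\Proj_t\,\nabla_{\mathbf{x}} c_t(\mathbf{x}_t;\boldsymbol{\theta})$.

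First I would feed the hypothesis $\ell_t(\boldsymbol{\theta})=0$ directly into Lemma~\ref{lem:proj-consistency}, obtaining $\Proj_t\,\nabla_{\mathbf{x}} c_t(\mathbf{x}_t;\boldsymbol{\theta})=0$. Second, I would use Assumption~\ref{ass:KKT} (KKT regularity): the observed pair $(\mathbf{x}_t,(\mathbf{B}_t,\mathbf{q}_t))$ admits the true parameter $\boldsymbol{\theta}_t$ together with multipliers satisfying the KKT system \eqref{eq:KKT-system} exactly, which by Lemma~\ref{lem:kkt-equivalence} is equivalent to $\ell_t(\boldsymbol{\theta}_t)=0$; applying Lemma~\ref{lem:proj-consistency} again then gives $\Proj_t\,\nabla_{\mathbf{x}} c_t(\mathbf{x}_t;\boldsymbol{\theta}_t)=0$. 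Since $\mathbf{x}_t$ is observed without noise, the loss evaluated at $\boldsymbol{\theta}_t$ is genuinely zero rather than merely small — this is precisely where the noiseless hypothesis enters.

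Third, combining the two displays yields $\Proj_t\,\nabla_{\mathbf{x}} c_t(\mathbf{x}_t;\boldsymbol{\theta}) = 0 = \Proj_t\,\nabla_{\mathbf{x}} c_t(\mathbf{x}_t;\boldsymbol{\theta}_t)$, i.e. the projected-gradient map takes the same value at $\boldsymbol{\theta}$ and at $\boldsymbol{\theta}_t$. Finally I would invoke Assumption~\ref{ass:inj}, which asserts that this map is injective on $\Theta$, to conclude $\boldsymbol{\theta}=\boldsymbol{\theta}_t$, completing the proof.

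I do not expect a serious obstacle; the only point requiring care is making explicit that $\ell_t(\boldsymbol{\theta}_t)=0$ — this is exactly the content of KKT regularity together with noiselessness, and without it the chain breaks. The PL inequality (Assumption~\ref{ass:PL}) is carried as a standing hypothesis through Lemmas~\ref{lem:kkt-equivalence}--\ref{lem:proj-consistency}; it is not invoked again directly in this particular argument, but it underwrites the interpretation (discussed after \eqref{eq:inverse-loss}) that the optimality-gap loss controls true forward suboptimality, so it is natural to retain it in the hypothesis list. A self-contained alternative is to bypass the lemmas entirely — apply $\Proj_t$ to the stationarity equation $\nabla_{\mathbf{x}} c_t(\mathbf{x}_t;\cdot)+\mathbf{B}_t^\top\boldsymbol{\lambda}_t=0$ for each of $\boldsymbol{\theta}$ and $\boldsymbol{\theta}_t$ and use $\Proj_t\mathbf{B}_t^\top=0$ from Lemma~\ref{lem:proj-properties} — but routing through the stated lemmas keeps the exposition modular.
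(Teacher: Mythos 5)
Your proposal is correct and follows essentially the same route as the paper's proof: zero loss gives the KKT system via Lemma~\ref{lem:kkt-equivalence}, the projected-gradient identities of Lemma~\ref{lem:proj-consistency} place both $\boldsymbol{\theta}$ and $\boldsymbol{\theta}_t$ at the same image point, and Assumption~\ref{ass:inj} collapses them. If anything, your write-up is more explicit than the paper's, which compresses the projection-plus-injectivity step into the single sentence ``the KKT system uniquely identifies the latent parameter''; your observation that the PL assumption is carried but not directly invoked here is also accurate.
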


\begin{proof}
Suppose $\ell_t(\boldsymbol{\theta}) = 0$.  
By Lemma~\ref{lem:kkt-equivalence}, this implies that the KKT system 
\eqref{eq:KKT-system} is satisfied at $(\mathbf{x}_t,\boldsymbol{\theta})$.  
Under Assumptions~\ref{ass:PL}, \ref{ass:KKT}, and \ref{ass:inj}, the KKT system uniquely identifies the latent parameter $\boldsymbol{\theta}_t$.  
Hence $\boldsymbol{\theta} = \boldsymbol{\theta}_t$.

Conversely, if $\boldsymbol{\theta} = \boldsymbol{\theta}_t$, then $\mathbf{x}_t$ is by definition the optimizer of the forward problem under $\boldsymbol{\theta}_t$, so the KKT system holds exactly.  
Thus $\ell_t(\boldsymbol{\theta}_t) = 0$.
\end{proof}

\begin{corollary}[Global identifiability under stationarity]\label{cor:stationary}
If $\boldsymbol{\theta}_t\equiv\boldsymbol{\theta}$ for all $t$, then
\[
\sum_{t=1}^T \ell_t(\boldsymbol{\theta}')=0 \;\;\Longrightarrow\;\; \boldsymbol{\theta}'=\boldsymbol{\theta}.
\]
\emph{Hence, stationarity across time implies global recovery of the 
common parameter.}
\end{corollary}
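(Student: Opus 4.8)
The plan is to reduce the statement to a pointwise application of Theorem~\ref{thm:ident}. The starting observation is that each summand $\ell_t(\boldsymbol{\theta}')$ is a sum of a squared norm, an infimum of squared norms, and an absolute value, hence $\ell_t(\boldsymbol{\theta}')\ge 0$ for every $t$ and every $\boldsymbol{\theta}'\in\Theta$. Consequently the hypothesis $\sum_{t=1}^T \ell_t(\boldsymbol{\theta}')=0$ is a sum of nonnegative terms equal to zero, which forces $\ell_t(\boldsymbol{\theta}')=0$ for each $t=1,\dots,T$ individually.

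Next I would invoke Theorem~\ref{thm:ident} at an arbitrary fixed period $t$. Its hypotheses---Assumptions~\ref{ass:PL}, \ref{ass:KKT}, and~\ref{ass:inj}, together with noiseless observation of $\mathbf{x}_t$---are assumed to hold at every $t$ in this setting, so the theorem applies verbatim and yields $\boldsymbol{\theta}'=\boldsymbol{\theta}_t$. Using the stationarity hypothesis $\boldsymbol{\theta}_t\equiv\boldsymbol{\theta}$, this gives $\boldsymbol{\theta}'=\boldsymbol{\theta}$, which is the desired conclusion. (In fact it suffices that the hypotheses of Theorem~\ref{thm:ident} hold for a \emph{single} period $t_0$, since then $\ell_{t_0}(\boldsymbol{\theta}')=0$ already pins down $\boldsymbol{\theta}'=\boldsymbol{\theta}_{t_0}=\boldsymbol{\theta}$; the full-strength assumption across all $t$ is not needed, and I may remark on this.)

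There is essentially no genuine obstacle here: the only point requiring a moment's care is confirming that the loss is pointwise nonnegative so that the zero-sum hypothesis transfers to each term, and that Theorem~\ref{thm:ident} indeed applies under exactly the assumptions that have been posited globally. The corollary is a direct ``integration over $t$'' of the pointwise identifiability result, specialized to the stationary regime, and the proof should be only a few lines.
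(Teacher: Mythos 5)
Your proposal is correct and follows essentially the same route as the paper's own proof: nonnegativity of each $\ell_t$ forces every summand to vanish, and Theorem~\ref{thm:ident} applied pointwise together with stationarity gives $\boldsymbol{\theta}'=\boldsymbol{\theta}$. Your added remark that a single period already suffices is a valid (and slightly sharper) observation that the paper does not make.
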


\begin{proof}
Suppose $\boldsymbol{\theta}_t \equiv \boldsymbol{\theta}$ for all $t$.  
If $\sum_{t=1}^T \ell_t(\boldsymbol{\theta}')=0$, then each summand satisfies $\ell_t(\boldsymbol{\theta}')=0$.  
By Theorem~\ref{thm:ident}, this implies $\boldsymbol{\theta}' = \boldsymbol{\theta}_t = \boldsymbol{\theta}$ for every $t$.  
Hence $\boldsymbol{\theta}'=\boldsymbol{\theta}$, establishing global identifiability.
\end{proof}

\begin{proposition}[Consistency across time implies stationarity]\label{prop:consistency-implies-stationarity}
If there exists $\bar{\boldsymbol{\theta}}$ such that $\ell_t(\bar{\boldsymbol{\theta}})=0$ for all $t$, 
then necessarily $\boldsymbol{\theta}_t\equiv \bar{\boldsymbol{\theta}}$.
\end{proposition}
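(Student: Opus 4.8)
The plan is to derive the proposition as an immediate consequence of the pointwise identifiability result, Theorem~\ref{thm:ident}. The key observation is that the hypothesis ``$\ell_t(\bar{\boldsymbol{\theta}})=0$ for all $t$'' is exactly the exact-fit condition appearing in the antecedent of that theorem, applied at each period $t$ separately, so the proof amounts to quantifying that theorem over all time indices.

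First I would fix an arbitrary index $t\in\{1,\dots,T\}$ and invoke the standing structural assumptions of this subsection --- Assumptions~\ref{ass:PL}, \ref{ass:KKT}, and \ref{ass:inj} --- together with the noiseless-observation convention under which identifiability is stated. Since $\ell_t(\bar{\boldsymbol{\theta}})=0$ by hypothesis, Theorem~\ref{thm:ident} yields $\bar{\boldsymbol{\theta}}=\boldsymbol{\theta}_t$. Because $t$ was arbitrary, the same conclusion holds for every period, so $\boldsymbol{\theta}_t=\bar{\boldsymbol{\theta}}$ for all $t=1,\dots,T$, which is precisely the assertion $\boldsymbol{\theta}_t\equiv\bar{\boldsymbol{\theta}}$.

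If one prefers a self-contained argument rather than a black-box appeal to Theorem~\ref{thm:ident}, the same conclusion follows by unwinding the chain Lemma~\ref{lem:kkt-equivalence} $\to$ Lemma~\ref{lem:proj-consistency} $\to$ Assumption~\ref{ass:inj}: for each $t$, zero loss forces the KKT system at $(\mathbf{x}_t,\bar{\boldsymbol{\theta}})$, hence $\Proj_t\nabla_{\mathbf{x}}c_t(\mathbf{x}_t;\bar{\boldsymbol{\theta}})=0=\Proj_t\nabla_{\mathbf{x}}c_t(\mathbf{x}_t;\boldsymbol{\theta}_t)$, and injectivity of $\boldsymbol{\theta}\mapsto\Proj_t\nabla_{\mathbf{x}}c_t(\mathbf{x}_t;\boldsymbol{\theta})$ on $\Theta$ then equates the two arguments, giving $\bar{\boldsymbol{\theta}}=\boldsymbol{\theta}_t$.

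There is essentially no analytical obstacle here; the proposition is a ``consistency $\Rightarrow$ stationarity'' restatement of identifiability. The only point requiring care is conceptual rather than technical: one must ensure that the structural assumptions --- in particular projected-gradient injectivity, Assumption~\ref{ass:inj} --- are posited to hold \emph{at every period} $t$, not merely at one, since the argument is applied period-by-period; one should also state explicitly that the observations are noiseless, so that Theorem~\ref{thm:ident} is applicable at each $t$. Provided these are in force, the result follows with no further work.
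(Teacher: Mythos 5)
Your proof is correct and follows the same route as the paper: apply the pointwise identifiability result (Theorem~\ref{thm:ident}) at each period $t$ to conclude $\bar{\boldsymbol{\theta}}=\boldsymbol{\theta}_t$ for every $t$, hence $\boldsymbol{\theta}_t\equiv\bar{\boldsymbol{\theta}}$. Your added remarks about requiring the structural assumptions and noiseless observations at every period are sensible but do not change the argument.
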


\begin{proof}
Suppose there exists $\bar{\boldsymbol{\theta}}$ such that $\ell_t(\bar{\boldsymbol{\theta}})=0$ for all $t$.  
Then, by Theorem~\ref{thm:ident}, $\bar{\boldsymbol{\theta}}=\boldsymbol{\theta}_t$ for each $t$.  
Therefore all $\boldsymbol{\theta}_t$ must be equal to the same $\bar{\boldsymbol{\theta}}$, i.e. $\boldsymbol{\theta}_t \equiv \bar{\boldsymbol{\theta}}$.  
This shows that temporal consistency of the inverse loss implies parameter stationarity.
\end{proof}

\paragraph{Linear-in-parameter specialization.}
Assume $\nabla_{\mathbf{x}} c_t(\mathbf{x};\boldsymbol{\theta}) = \mathbf{A}_t(\mathbf{x})\boldsymbol{\theta} + \mathbf{b}_t(\mathbf{x})$.  
Define $\widetilde{\mathbf{A}}_t := \mathbf{P}_t \mathbf{A}_t(\mathbf{x}_t)$ and 
$\mathcal{A} := [\widetilde{\mathbf{A}}_1;\dots;\widetilde{\mathbf{A}}_T]$.  

\begin{theorem}[Identifiability under linear structure]\label{thm:linear-ident}
\begin{enumerate}
\item If $\mathrm{rank}(\widetilde{\mathbf{A}}_t)=p$ for some $t$, then $\ell_t(\boldsymbol{\theta})=0 \Rightarrow \boldsymbol{\theta}=\boldsymbol{\theta}_t$.  
\item If $\boldsymbol{\theta}_t\equiv\boldsymbol{\theta}$ and $\mathrm{rank}(\mathcal{A})=p$, then $\sum_t \ell_t(\boldsymbol{\theta}')=0 \Rightarrow \boldsymbol{\theta}'=\boldsymbol{\theta}$.  
\end{enumerate}
\end{theorem}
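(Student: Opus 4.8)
The plan is to reduce both claims to elementary linear algebra by exploiting the affine-in-$\boldsymbol{\theta}$ form of the gradient together with Lemma~\ref{lem:proj-consistency}. The key observation is that, under the linear parametrization $\nabla_{\mathbf{x}} c_t(\mathbf{x}_t;\boldsymbol{\theta}) = \mathbf{A}_t(\mathbf{x}_t)\boldsymbol{\theta} + \mathbf{b}_t(\mathbf{x}_t)$, the offset $\mathbf{b}_t(\mathbf{x}_t)$ does not depend on $\boldsymbol{\theta}$; hence differencing the projected stationarity condition at two parameter values annihilates it and leaves a homogeneous system governed by $\widetilde{\mathbf{A}}_t = \mathbf{P}_t \mathbf{A}_t(\mathbf{x}_t)$.

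For part~(1), I would first invoke Lemma~\ref{lem:proj-consistency}: the hypothesis $\ell_t(\boldsymbol{\theta})=0$ yields $\mathbf{P}_t \nabla_{\mathbf{x}} c_t(\mathbf{x}_t;\boldsymbol{\theta}) = \mathbf{0}$, and, since Assumption~\ref{ass:KKT} guarantees $\ell_t(\boldsymbol{\theta}_t)=0$, the same lemma gives $\mathbf{P}_t \nabla_{\mathbf{x}} c_t(\mathbf{x}_t;\boldsymbol{\theta}_t) = \mathbf{0}$. Substituting the linear form into both identities and subtracting gives $\mathbf{P}_t \mathbf{A}_t(\mathbf{x}_t)(\boldsymbol{\theta}-\boldsymbol{\theta}_t) = \widetilde{\mathbf{A}}_t(\boldsymbol{\theta}-\boldsymbol{\theta}_t) = \mathbf{0}$. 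Because $\widetilde{\mathbf{A}}_t \in \mathbb{R}^{n\times p}$ has full column rank $p$, its null space is trivial, so $\boldsymbol{\theta}=\boldsymbol{\theta}_t$. The converse direction ($\ell_t(\boldsymbol{\theta}_t)=0$) is immediate from Assumption~\ref{ass:KKT}, exactly as in Theorem~\ref{thm:ident}.

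For part~(2), stationarity $\boldsymbol{\theta}_t\equiv\boldsymbol{\theta}$ means every period shares the same true parameter. Since each $\ell_t\ge 0$, the hypothesis $\sum_t \ell_t(\boldsymbol{\theta}')=0$ forces $\ell_t(\boldsymbol{\theta}')=0$ for all $t$. Repeating the differencing computation of part~(1) at each $t$ — now without assuming any single $\widetilde{\mathbf{A}}_t$ has full rank — yields $\widetilde{\mathbf{A}}_t(\boldsymbol{\theta}'-\boldsymbol{\theta})=\mathbf{0}$ for every $t$, i.e. $\mathcal{A}(\boldsymbol{\theta}'-\boldsymbol{\theta})=\mathbf{0}$ for the stacked matrix $\mathcal{A}=[\widetilde{\mathbf{A}}_1;\dots;\widetilde{\mathbf{A}}_T]$. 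The rank condition $\mathrm{rank}(\mathcal{A})=p$ again makes the null space trivial, so $\boldsymbol{\theta}'=\boldsymbol{\theta}$.

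There is no deep obstacle here; the statement is essentially a concrete, verifiable instantiation of Assumption~\ref{ass:inj} (projected-gradient injectivity) in the affine case, and I would flag that connection explicitly. The only points requiring a little care are: (i) applying Lemma~\ref{lem:proj-consistency} at the true parameter $\boldsymbol{\theta}_t$, which implicitly uses Assumption~\ref{ass:KKT} to certify $\ell_t(\boldsymbol{\theta}_t)=0$; (ii) noting that $\mathbf{b}_t(\mathbf{x}_t)$ is evaluated at the \emph{same} observed allocation $\mathbf{x}_t$ in both stationarity identities, which is precisely what permits the cancellation; and (iii) translating the rank hypothesis into injectivity of the linear map $\mathbf{v}\mapsto\widetilde{\mathbf{A}}_t\mathbf{v}$ (respectively $\mathbf{v}\mapsto\mathcal{A}\mathbf{v}$) on $\mathbb{R}^p$. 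Aggregating across periods in part~(2) only strengthens identifiability, so no single period need be informative on its own — which is the motivation for Corollary~\ref{cor:rank}.
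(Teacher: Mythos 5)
Your proposal is correct, and it supplies exactly the argument one would expect: apply Lemma~\ref{lem:proj-consistency} at both $\boldsymbol{\theta}$ and the true parameter, use the affine structure so that the offset $\mathbf{b}_t(\mathbf{x}_t)$ cancels under differencing, and conclude from the trivial null space of $\widetilde{\mathbf{A}}_t$ (resp.\ the stacked matrix $\mathcal{A}$, after noting that nonnegativity of each $\ell_t$ forces every summand to vanish). Note that the paper does not actually print a proof of Theorem~\ref{thm:linear-ident} --- the proof environment that follows belongs to Corollary~\ref{cor:rank} and merely cites the theorem --- so your argument fills that gap rather than diverging from an existing one; your care about where Assumption~\ref{ass:KKT} enters (to certify $\ell_t(\boldsymbol{\theta}_t)=0$ so the lemma applies at the true parameter) is exactly the point worth making explicit.
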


\begin{corollary}[Rank conditions and modulus]\label{cor:rank}
Identifiability reduces to full column rank of $\widetilde{\mathbf{A}}_t$ (pointwise) 
or $\mathcal{A}$ (stationary). The identifiability modulus is lower bounded by 
the squared smallest singular value, providing stability guarantees. 
\end{corollary}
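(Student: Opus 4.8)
The plan is to obtain the corollary as a quantitative sharpening of Theorem~\ref{thm:linear-ident}: I would show that under the linear-in-parameter model, full column rank of $\widetilde{\mathbf{A}}_t$ (resp.\ $\mathcal{A}$) is not only sufficient for identifiability but yields an explicit quadratic lower bound on the inverse loss, whose coefficient is the squared smallest singular value. First I would record the trivial equivalence: for $\widetilde{\mathbf{A}}_t\in\mathbb{R}^{n\times p}$, the hypothesis $\mathrm{rank}(\widetilde{\mathbf{A}}_t)=p$ of Theorem~\ref{thm:linear-ident}(i) \emph{is} full column rank, and $\mathrm{rank}(\mathcal{A})=p$ is full column rank of the stacked operator, so the first sentence of the corollary is a restatement. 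The content is the second sentence, which I would make precise as $\ell_t(\boldsymbol{\theta})\ge\sigma_{\min}(\widetilde{\mathbf{A}}_t)^2\|\boldsymbol{\theta}-\boldsymbol{\theta}_t\|^2$ pointwise, and $\sum_{t=1}^T\ell_t(\boldsymbol{\theta}')\ge\sigma_{\min}(\mathcal{A})^2\|\boldsymbol{\theta}'-\boldsymbol{\theta}\|^2$ under stationarity, so that the identifiability modulus $\kappa$ of Table~\ref{tab:notation} may be taken as the squared smallest singular value.

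The core step is to lower bound $\ell_t(\boldsymbol{\theta})$ by the squared projected gradient. Discarding the nonnegative primal-feasibility and complementarity terms in \eqref{eq:inverse-loss} leaves the dual-feasibility term $\inf_{\boldsymbol{\lambda}\ge 0}\|\nabla_{\mathbf{x}}c_t(\mathbf{x}_t;\boldsymbol{\theta})+\mathbf{B}_t^\top\boldsymbol{\lambda}\|^2$. Enlarging the feasible set from $\boldsymbol{\lambda}\ge 0$ to all $\boldsymbol{\lambda}\in\mathbb{R}^k$ only decreases the infimum, and the unconstrained infimum over $\mathbf{B}_t^\top\boldsymbol{\lambda}\in\mathsf{range}(\mathbf{B}_t^\top)$ equals the squared distance from $\nabla_{\mathbf{x}}c_t(\mathbf{x}_t;\boldsymbol{\theta})$ to $\mathsf{range}(\mathbf{B}_t^\top)$, which by Lemma~\ref{lem:proj-properties} is exactly $\|\mathbf{P}_t\nabla_{\mathbf{x}}c_t(\mathbf{x}_t;\boldsymbol{\theta})\|^2$. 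Hence $\ell_t(\boldsymbol{\theta})\ge\|\mathbf{P}_t\nabla_{\mathbf{x}}c_t(\mathbf{x}_t;\boldsymbol{\theta})\|^2$.

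I would then specialize to $\nabla_{\mathbf{x}}c_t(\mathbf{x};\boldsymbol{\theta})=\mathbf{A}_t(\mathbf{x})\boldsymbol{\theta}+\mathbf{b}_t(\mathbf{x})$, so $\mathbf{P}_t\nabla_{\mathbf{x}}c_t(\mathbf{x}_t;\boldsymbol{\theta})=\widetilde{\mathbf{A}}_t\boldsymbol{\theta}+\mathbf{P}_t\mathbf{b}_t(\mathbf{x}_t)$. Assumption~\ref{ass:KKT} guarantees $\ell_t(\boldsymbol{\theta}_t)=0$, whence Lemma~\ref{lem:proj-consistency} gives $\widetilde{\mathbf{A}}_t\boldsymbol{\theta}_t+\mathbf{P}_t\mathbf{b}_t(\mathbf{x}_t)=0$; subtracting yields $\mathbf{P}_t\nabla_{\mathbf{x}}c_t(\mathbf{x}_t;\boldsymbol{\theta})=\widetilde{\mathbf{A}}_t(\boldsymbol{\theta}-\boldsymbol{\theta}_t)$. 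Combining with the previous bound and the elementary inequality $\|\widetilde{\mathbf{A}}_t\mathbf{v}\|^2\ge\sigma_{\min}(\widetilde{\mathbf{A}}_t)^2\|\mathbf{v}\|^2$ gives the pointwise estimate; full column rank makes $\sigma_{\min}(\widetilde{\mathbf{A}}_t)>0$, so $\ell_t(\boldsymbol{\theta})=0$ forces $\boldsymbol{\theta}=\boldsymbol{\theta}_t$, recovering Theorem~\ref{thm:linear-ident}(i) with a stability modulus. For the stationary case, setting $\boldsymbol{\theta}_t\equiv\boldsymbol{\theta}$ and summing over $t$ gives $\sum_t\|\widetilde{\mathbf{A}}_t(\boldsymbol{\theta}'-\boldsymbol{\theta})\|^2=\|\mathcal{A}(\boldsymbol{\theta}'-\boldsymbol{\theta})\|^2\ge\sigma_{\min}(\mathcal{A})^2\|\boldsymbol{\theta}'-\boldsymbol{\theta}\|^2$, matching Theorem~\ref{thm:linear-ident}(ii).

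The only delicate point is bookkeeping around the dual-feasibility term: one must check that relaxing $\boldsymbol{\lambda}\ge 0$ to unconstrained $\boldsymbol{\lambda}$ moves the infimum \emph{downward}, so the chain of inequalities remains a valid lower bound on $\ell_t$, and that the multiplier appearing in the complementarity term of \eqref{eq:inverse-loss} is handled simply by dropping that nonnegative term rather than identifying it with the minimizing $\boldsymbol{\lambda}$. Beyond the stated linear-gradient structure no further convexity of $c_t$ is invoked, and the rest is a direct chaining of Lemmas~\ref{lem:proj-properties} and~\ref{lem:proj-consistency} with a standard singular-value estimate.
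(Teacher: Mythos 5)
Your proposal is correct, and it is in fact more complete than the paper's own proof. The paper disposes of the first sentence exactly as you do—by observing that the rank hypotheses of Theorem~\ref{thm:linear-ident} \emph{are} the full-column-rank conditions—and then justifies the modulus claim with only the bare inequality $\|\mathcal{A}\,\Delta\boldsymbol{\theta}\|^2 \ge \sigma_{\min}^2(\mathcal{A})\,\|\Delta\boldsymbol{\theta}\|^2$, never explaining why the quantity $\|\mathcal{A}\,\Delta\boldsymbol{\theta}\|^2$ is the relevant one, i.e., why it lower-bounds the aggregate inverse loss. Your chain supplies exactly that missing link: dropping the nonnegative primal-feasibility and complementarity terms, relaxing $\boldsymbol{\lambda}\ge 0$ to unconstrained $\boldsymbol{\lambda}$ (which correctly moves the infimum downward), identifying the unconstrained infimum with $\|\mathbf{P}_t\nabla_{\mathbf{x}}c_t(\mathbf{x}_t;\boldsymbol{\theta})\|^2$ via Lemma~\ref{lem:proj-properties}, and then using Lemma~\ref{lem:proj-consistency} together with $\ell_t(\boldsymbol{\theta}_t)=0$ to cancel the offset $\mathbf{P}_t\mathbf{b}_t(\mathbf{x}_t)$ and reduce the projected gradient to $\widetilde{\mathbf{A}}_t(\boldsymbol{\theta}-\boldsymbol{\theta}_t)$. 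This yields the genuine quadratic-growth statement $\ell_t(\boldsymbol{\theta})\ge\sigma_{\min}^2(\widetilde{\mathbf{A}}_t)\|\boldsymbol{\theta}-\boldsymbol{\theta}_t\|^2$ (and its summed stationary analogue), which is what one actually needs to call $\sigma_{\min}^2$ a lower bound on the identifiability modulus $\kappa$ as used later in Assumption~\ref{ass:noise-aux}. What your version buys is a rigorous justification of the second sentence of the corollary; what the paper's terser version buys is brevity at the cost of leaving the loss-to-singular-value connection implicit. Your handling of the two delicate points—the direction of the infimum under relaxation of $\boldsymbol{\lambda}\ge0$, and simply discarding rather than identifying the multiplier in the complementarity term—is exactly right.
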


\begin{proof}
Part (1) of Theorem~\ref{thm:linear-ident} shows that pointwise identifiability holds if 
$\widetilde{\mathbf{A}}_t$ has full column rank.  
Part (2) shows that stationary identifiability holds if $\mathcal{A}$ has full column rank.  
Hence identifiability is equivalent to these rank conditions.

Furthermore, stability follows from perturbation analysis:  
for any $\Delta \boldsymbol{\theta}$,
\[
\|\mathcal{A} \Delta \boldsymbol{\theta}\|^2 \;\ge\; \sigma_{\min}^2(\mathcal{A}) \, \|\Delta \boldsymbol{\theta}\|^2,
\]
where $\sigma_{\min}(\mathcal{A})$ is the smallest singular value of $\mathcal{A}$.  
Thus the identifiability modulus $\kappa$ is lower bounded by $\sigma_{\min}^2(\mathcal{A})$, 
ensuring robustness of recovery under noise and approximation error.
\end{proof}

If $c_t(\cdot;\boldsymbol{\theta})$ is scale-invariant, identifiability holds only up to scaling; 
a normalization (e.g., $\|\boldsymbol{\theta}\|_2=1$) is imposed in such cases.

% -------------------------------------
\subsection{Existence and Uniqueness}
\label{subsec:existence}

We next establish well-posedness of both the forward allocation problem 
and the inverse estimator. \emph{Existence} ensures that a feasible solution 
to each problem always exists, while \emph{uniqueness} ensures that the 
solution and recovered parameter are not ambiguous. 
These results rely primarily on the \emph{regularity assumptions}.

\begin{assumption}[Level-boundedness / coercivity or compactness (Regularity)]
\label{ass:level}
For each $t$, either $X_t$ is compact, or $c_t(\cdot;\boldsymbol{\theta}_t)$ is level-bounded
(coercive) over $X_t$ so that $\{\mathbf{x}\in X_t: c_t(\mathbf{x};\boldsymbol{\theta}_t)\le \alpha\}$ is compact for all $\alpha$.
\end{assumption}

\begin{assumption}[Quadratic growth (Regularity)]
\label{ass:qg-main}
For each $t$, $c_t(\cdot;\boldsymbol{\theta}_t)$ is convex on $X_t$ and satisfies a quadratic growth condition:
there exists $\alpha_t>0$ such that
\[
c_t(\mathbf{x};\boldsymbol{\theta}_t)-c_t(\mathbf{x}_t^\star;\boldsymbol{\theta}_t)\ \ge\ \tfrac{\alpha_t}{2}\,\|\mathbf{x}-\mathbf{x}_t^\star\|^2,\quad \forall \mathbf{x}\in X_t.
\]
\end{assumption}

\begin{assumption}[Continuity and compactness for inverse loss (Regularity)]
\label{ass:theta-compact}
Each $\ell_t(\boldsymbol{\theta})$ is lower semicontinuous in $\boldsymbol{\theta}$, and $\Theta$ is compact 
(\emph{or} $\sum_{t=1}^T \ell_t(\boldsymbol{\theta})$ is coercive on $\Theta$).
\end{assumption}

\begin{theorem}[Existence and Uniqueness of Solutions]
\label{thm:exist-unique}
Suppose Assumptions~\ref{ass:KKT}, \ref{ass:level}, and \ref{ass:qg-main} hold. Then:
\begin{enumerate}
    \item \textbf{Forward problem.} For each $t=1,\dots,T$, the allocation problem
    \begin{equation}
    \min_{\mathbf{x}\in X_t} c_t(\mathbf{x};\boldsymbol{\theta}_t)
    \label{eq:forward-existence}
    \end{equation}
    admits at least one solution, and the solution $\mathbf{x}_t^\star$ is unique.
    
    \item \textbf{Inverse estimator.} If, in addition, Assumption~\ref{ass:theta-compact} holds, then the inverse estimator
    \begin{equation}
    \hat{\boldsymbol{\theta}} \in \arg\min_{\boldsymbol{\theta}\in\Theta}\;\sum_{t=1}^T \ell_t(\boldsymbol{\theta})
    \label{eq:inverse-existence}
    \end{equation}
    admits at least one minimizer. Furthermore, under identifiability 
    (Assumption~\ref{ass:inj}) with noiseless observations, the minimizer is unique 
    and coincides with the true parameter.
\end{enumerate}
\end{theorem}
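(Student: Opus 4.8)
The plan is to prove the two parts independently, since they concern different objects—the forward problem in $\mathbf{x}$ and the inverse estimator in $\boldsymbol{\theta}$—and draw on disjoint blocks of assumptions. Each part decomposes into an \emph{existence} step, handled by a Weierstrass-type compactness argument, and a \emph{uniqueness} step, handled by a growth/strict-monotonicity argument. For the forward problem I would use Assumptions~\ref{ass:KKT}, \ref{ass:level}, and \ref{ass:qg-main}; for the inverse estimator I would add Assumption~\ref{ass:theta-compact} for existence and Assumption~\ref{ass:inj} (via the already-proved Theorem~\ref{thm:ident} and Corollary~\ref{cor:stationary}) for the consistency claim.

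For the forward problem I would first settle existence for a fixed $t$: Assumption~\ref{ass:KKT} furnishes an observed $\mathbf{x}_t$ with $\mathbf{B}_t\mathbf{x}_t\le\mathbf{q}_t$, so $X_t\neq\emptyset$. If $X_t$ is compact, then $c_t(\cdot;\boldsymbol{\theta}_t)$, being convex by Assumption~\ref{ass:qg-main} and hence lower semicontinuous on $X_t$, attains its infimum by the Weierstrass theorem. If $X_t$ is only closed but $c_t(\cdot;\boldsymbol{\theta}_t)$ is level-bounded over it (Assumption~\ref{ass:level}), I would restrict to the nonempty compact sublevel set $S:=\{\mathbf{x}\in X_t: c_t(\mathbf{x};\boldsymbol{\theta}_t)\le c_t(\mathbf{x}_0;\boldsymbol{\theta}_t)\}$ for some $\mathbf{x}_0\in X_t$, minimize the lsc function over $S$, and observe the resulting minimizer is global since points outside $S$ have strictly larger value. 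For uniqueness I would take $\mathbf{x}_t^\star$ to be a minimizer realizing the quadratic-growth bound of Assumption~\ref{ass:qg-main}: any other minimizer $\mathbf{x}'$ satisfies $c_t(\mathbf{x}';\boldsymbol{\theta}_t)-c_t(\mathbf{x}_t^\star;\boldsymbol{\theta}_t)=0$, so the bound forces $\tfrac{\alpha_t}{2}\|\mathbf{x}'-\mathbf{x}_t^\star\|^2\le 0$, i.e. $\mathbf{x}'=\mathbf{x}_t^\star$. (Equivalently, convexity makes the whole segment between two minimizers optimal while quadratic growth collapses it to a point.)

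For the inverse estimator, existence follows the same template: $F(\boldsymbol{\theta}):=\sum_{t=1}^T\ell_t(\boldsymbol{\theta})$ is a finite sum of lower semicontinuous functions (Assumption~\ref{ass:theta-compact}), hence lsc, so Weierstrass applies when $\Theta$ is compact, and the sublevel-set reduction applies when $F$ is coercive on the closed set $\Theta$. For the uniqueness/consistency statement I would work in the well-specified noiseless regime: Assumption~\ref{ass:KKT} says each $\mathbf{x}_t$ is exactly rationalized by its true parameter, so by Lemma~\ref{lem:kkt-equivalence} one has $\ell_t(\boldsymbol{\theta}_t)=0$; since every $\ell_t\ge 0$ by construction in \eqref{eq:inverse-loss}, the minimum value of $F$ is $0$, attained precisely at those $\boldsymbol{\theta}$ with $\ell_t(\boldsymbol{\theta})=0$ for all $t$. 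In the stationary case $\boldsymbol{\theta}_t\equiv\boldsymbol{\theta}$, Corollary~\ref{cor:stationary}—which rests on the pointwise identifiability Theorem~\ref{thm:ident}, and hence on Assumption~\ref{ass:inj}—gives $F(\boldsymbol{\theta}')=0\Rightarrow\boldsymbol{\theta}'=\boldsymbol{\theta}$, so the minimizer is unique and equals the true parameter; more generally the conclusion holds whenever a common rationalizing parameter exists, by Proposition~\ref{prop:consistency-implies-stationarity}.

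I expect the main obstacle to be two points of care rather than a deep difficulty. First, the continuity needed for Weierstrass: a convex function can fail to be continuous at the boundary of its effective domain, so I would either add the mild standing hypothesis $X_t\subseteq\operatorname{int}\bigl(\operatorname{dom} c_t(\cdot;\boldsymbol{\theta}_t)\bigr)$ or, more economically, note that lower semicontinuity is all the existence argument actually uses and that convex functions possess it on closed domains. Second, the inverse-estimator uniqueness claim is only meaningful when $F$ can genuinely be driven to its lower bound $0$ by a single $\boldsymbol{\theta}$; in a drifting environment this fails, so the phrase ``coincides with the true parameter'' must be read as pertaining to the stationary (or common-rationalizing-parameter) setting, which I would state explicitly. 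Finally, the quadratic-growth uniqueness step has to be sequenced \emph{after} the existence step, since Assumption~\ref{ass:qg-main} is phrased relative to a minimizer whose existence is established only in part~1.
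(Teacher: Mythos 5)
Your proposal is correct and follows essentially the same route as the paper: Weierstrass (via compactness or the level-boundedness/coercivity of Assumption~\ref{ass:level}) for existence of both problems, quadratic growth (Assumption~\ref{ass:qg-main}) collapsing any second forward minimizer, and identifiability (Assumption~\ref{ass:inj} through Theorem~\ref{thm:ident}) to pin down the inverse minimizer in the noiseless case. Your added caveats—that lower semicontinuity is what the existence argument really uses, and that ``coincides with the true parameter'' is only meaningful in the stationary or common-rationalizing-parameter regime—are refinements the paper's own proof elides but does implicitly rely on.
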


\begin{proof}
(i) \emph{Forward:} By Assumption~\ref{ass:KKT}, $X_t$ is nonempty, convex, and closed. 
Assumption~\ref{ass:level} yields existence of a minimizer via Weierstrass. 
Uniqueness follows from Assumption~\ref{ass:qg-main}, 
since otherwise two distinct minimizers would contradict quadratic growth. 

(ii) \emph{Inverse:} Lower semicontinuity of $\ell_t$ and compactness (or coercivity) in $\boldsymbol{\theta}$ 
ensure existence of a minimizer of \eqref{eq:inverse-existence}. In the noiseless setting, 
$\ell_t(\boldsymbol{\theta}_t)=0$ and, by Assumption~\ref{ass:inj} (or Theorem~\ref{thm:linear-ident}), 
$\sum_{t=1}^T \ell_t(\boldsymbol{\theta})=0$ implies $\boldsymbol{\theta}=\boldsymbol{\theta}_t$. 
Hence the minimizer is unique and matches the true parameter.
\end{proof}

Together, these results establish that both the forward allocation problem 
and the inverse estimation problem are \emph{well-posed}: solutions always exist 
and are unique under the stated structural and regularity assumptions.

% -------------------------------------
\subsection{Static and Dynamic Regret}
\label{subsec:regret}

We now analyze the performance of the inverse optimization estimator in terms of regret. 
\emph{Static regret} measures error relative to the best fixed comparator parameter, 
while \emph{dynamic regret} accounts for temporal variation in the true parameter sequence. 
These guarantees rely on \emph{statistical assumptions}: bounded subgradients 
(Assumption~\ref{ass:G}) and square-summable drift (Assumption~\ref{ass:drift-sq}). 
This separation clarifies that structural identifiability (Section~\ref{subsec:identifiability}) 
and well-posedness (Section~\ref{subsec:existence}) provide the foundation, while 
statistical assumptions control complexity over time.

\begin{assumption}[Uniform subgradient bound]
\label{ass:G}
There exists $G<\infty$ such that for all $t$ and $\boldsymbol{\theta}\in\Theta$ there is 
$\mathbf{g}_t(\boldsymbol{\theta})\in\partial \ell_t(\boldsymbol{\theta})$ with $\|\mathbf{g}_t(\boldsymbol{\theta})\|\le G$.
\end{assumption}

\begin{theorem}[Static Regret Bound]
\label{thm:static-regret}
Suppose Assump.~1 (PL inequality), Assump.~2 (KKT regularity), 
and Assump.~7 (Uniform subgradient bound) hold. 
Let 
\[
\boldsymbol{\theta}^\star \in \arg\min_{\boldsymbol{\theta}\in\Theta}\sum_{t=1}^T \ell_t(\boldsymbol{\theta})
\]
denote the best fixed parameter in hindsight. Then the cumulative regret satisfies
\begin{equation}
\sum_{t=1}^T \big(\ell_t(\hat{\boldsymbol{\theta}}_t) - \ell_t(\boldsymbol{\theta}^\star)\big)
= O\!\left(\sqrt{T}\,\right).
\label{eq:static-regret}
\end{equation}
\end{theorem}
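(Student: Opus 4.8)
The plan is to prove the static regret bound by running projected online gradient descent (OGD) on the sequence of convex loss functions $\ell_1,\dots,\ell_T$ over the parameter set $\Theta$, and invoking Zinkevich's classical regret analysis. First I would verify the standing hypotheses: each $\ell_t$ is convex in $\boldsymbol{\theta}$ (this follows because each of the three terms in \eqref{eq:inverse-loss} is convex in $\boldsymbol{\theta}$ — the primal-feasibility term does not depend on $\boldsymbol{\theta}$, the dual-feasibility term is an infimal projection of a convex quadratic in $\nabla_{\mathbf{x}}c_t(\mathbf{x}_t;\boldsymbol{\theta})$ composed with an affine or at least convex-in-$\boldsymbol{\theta}$ gradient map, and the complementarity term is handled analogously); and $\Theta$ is convex and, via Assumption~\ref{ass:theta-compact}, bounded, so it has finite diameter $D := \sup_{\boldsymbol{\theta},\boldsymbol{\theta}'\in\Theta}\|\boldsymbol{\theta}-\boldsymbol{\theta}'\|$. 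Assumption~\ref{ass:G} supplies the uniform subgradient bound $\|\mathbf{g}_t(\boldsymbol{\theta})\|\le G$ for a selected subgradient $\mathbf{g}_t(\boldsymbol{\theta})\in\partial\ell_t(\boldsymbol{\theta})$.

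Next I would state the update rule: $\hat{\boldsymbol{\theta}}_{t+1} = \Pi_{\Theta}\big(\hat{\boldsymbol{\theta}}_t - \eta\,\mathbf{g}_t(\hat{\boldsymbol{\theta}}_t)\big)$ with a fixed step size $\eta = D/(G\sqrt{T})$ (or the anytime schedule $\eta_t = D/(G\sqrt{t})$). The core of the argument is the standard potential/telescoping lemma: using nonexpansiveness of the Euclidean projection $\Pi_{\Theta}$ and expanding $\|\hat{\boldsymbol{\theta}}_{t+1}-\boldsymbol{\theta}^\star\|^2 \le \|\hat{\boldsymbol{\theta}}_t-\boldsymbol{\theta}^\star\|^2 - 2\eta\langle \mathbf{g}_t(\hat{\boldsymbol{\theta}}_t),\hat{\boldsymbol{\theta}}_t-\boldsymbol{\theta}^\star\rangle + \eta^2\|\mathbf{g}_t(\hat{\boldsymbol{\theta}}_t)\|^2$, then rearranging and invoking convexity $\ell_t(\hat{\boldsymbol{\theta}}_t)-\ell_t(\boldsymbol{\theta}^\star) \le \langle \mathbf{g}_t(\hat{\boldsymbol{\theta}}_t),\hat{\boldsymbol{\theta}}_t-\boldsymbol{\theta}^\star\rangle$. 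Summing over $t=1,\dots,T$ telescopes the distance terms, leaving $\sum_{t=1}^T(\ell_t(\hat{\boldsymbol{\theta}}_t)-\ell_t(\boldsymbol{\theta}^\star)) \le \frac{\|\hat{\boldsymbol{\theta}}_1-\boldsymbol{\theta}^\star\|^2}{2\eta} + \frac{\eta}{2}\sum_{t=1}^T\|\mathbf{g}_t(\hat{\boldsymbol{\theta}}_t)\|^2 \le \frac{D^2}{2\eta} + \frac{\eta G^2 T}{2}$. Substituting $\eta = D/(G\sqrt{T})$ gives the bound $DG\sqrt{T} = O(\sqrt{T})$, which is \eqref{eq:static-regret}.

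The step I expect to be the main obstacle is not the OGD telescoping — that is routine — but rather justifying that the roles of Assumptions~\ref{ass:PL} (PL inequality) and \ref{ass:KKT} (KKT regularity) are genuinely needed, since the bare $O(\sqrt{T})$ regret follows from convexity, boundedness, and bounded subgradients alone. I would argue that the PL inequality and KKT regularity enter to guarantee that the loss $\ell_t$ is a meaningful surrogate — in particular that $\ell_t(\boldsymbol{\theta}_t)=0$ (via Lemma~\ref{lem:kkt-equivalence} and Assumption~\ref{ass:KKT}), so the comparator value $\sum_t\ell_t(\boldsymbol{\theta}^\star)$ is small (indeed zero in the realizable noiseless case, $\boldsymbol{\theta}^\star$ being attainable when the $\boldsymbol{\theta}_t$ coincide), and via the error-bound property that controlling $\ell_t$ controls the true optimality gap of the forward problem. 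A secondary technical point I would need to pin down carefully is measurability/well-definedness of the selected subgradient $\mathbf{g}_t$ and the existence of a subgradient at every iterate — here lower semicontinuity and convexity of $\ell_t$ on the interior of its domain, together with compactness of $\Theta$, suffice, but I would remark that if $\Theta$ touches the boundary of $\mathrm{dom}\,\ell_t$ one restricts attention to a slightly enlarged convex set or invokes the standard relative-interior argument. Finally I would note that the hidden constant in the $O(\sqrt{T})$ is exactly $DG$, making the bound explicit.
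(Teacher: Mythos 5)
Your proposal is correct and follows essentially the same route as the paper's own proof: projected online gradient descent, the convexity linearization $\ell_t(\hat{\boldsymbol{\theta}}_t)-\ell_t(\boldsymbol{\theta}^\star)\le\langle\mathbf{g}_t(\hat{\boldsymbol{\theta}}_t),\hat{\boldsymbol{\theta}}_t-\boldsymbol{\theta}^\star\rangle$, the standard telescoping inequality, the bound $\|\mathbf{g}_t\|\le G$ from Assumption~\ref{ass:G}, and the balancing choice $\eta=\Theta(1/\sqrt{T})$. Your version is in fact somewhat more careful than the paper's (explicit diameter $D$, explicit constant $DG$, and a discussion of why convexity of $\ell_t$ and the roles of Assumptions~\ref{ass:PL} and \ref{ass:KKT} need to be addressed), but the core argument is identical.
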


\begin{proof}
Consider projected gradient descent with step size $\eta>0$:
\[
\hat{\boldsymbol{\theta}}_{t+1} 
= \Pi_{\Theta}\!\left(\hat{\boldsymbol{\theta}}_t - \eta \nabla_{\boldsymbol{\theta}} \ell_t(\hat{\boldsymbol{\theta}}_t)\right).
\]
By convexity of $\ell_t$, we have
\[
\ell_t(\hat{\boldsymbol{\theta}}_t) - \ell_t(\boldsymbol{\theta}^\star) 
\le \langle \nabla_{\boldsymbol{\theta}} \ell_t(\hat{\boldsymbol{\theta}}_t), \hat{\boldsymbol{\theta}}_t - \boldsymbol{\theta}^\star \rangle.
\]
Summing over $t=1,\dots,T$ and applying the standard online gradient descent inequality yields
\[
\sum_{t=1}^T \big(\ell_t(\hat{\boldsymbol{\theta}}_t) - \ell_t(\boldsymbol{\theta}^\star)\big)
\le \frac{\|\boldsymbol{\theta}^\star-\hat{\boldsymbol{\theta}}_1\|^2}{2\eta}
+ \frac{\eta}{2}\sum_{t=1}^T \|\nabla_{\boldsymbol{\theta}} \ell_t(\hat{\boldsymbol{\theta}}_t)\|^2.
\]
By Assump.~7, $\|\nabla_{\boldsymbol{\theta}} \ell_t(\hat{\boldsymbol{\theta}}_t)\|\le G$, so the second term is at most $\tfrac{\eta G^2 T}{2}$. Choosing $\eta=\Theta(1/\sqrt{T})$ balances the two terms and gives
\[
\sum_{t=1}^T \big(\ell_t(\hat{\boldsymbol{\theta}}_t) - \ell_t(\boldsymbol{\theta}^\star)\big) = O(\sqrt{T}).
\]
This establishes the claimed regret bound, which is minimax-optimal up to constants and coincides with classical results in online convex optimization.
\end{proof}

\begin{assumption}[Per-period drift square-summability]
\label{ass:drift-sq}
The parameter drift satisfies $\sum_{t=2}^T \|\boldsymbol{\theta}_t-\boldsymbol{\theta}_{t-1}\|^2 \le C_{\Delta}\, V_T$ for some constant $C_{\Delta}>0$,
where $V_T=\sum_{t=2}^T \|\boldsymbol{\theta}_t-\boldsymbol{\theta}_{t-1}\|$.
\end{assumption}

Assumption~\ref{ass:drift-sq} holds, for example, if $\|\boldsymbol{\theta}_t-\boldsymbol{\theta}_{t-1}\|\le \Delta_{\max}$ uniformly, in which case $C_{\Delta}=\Delta_{\max}$.

\begin{theorem}[Dynamic Regret Bound]
\label{thm:dynamic-regret}
Suppose Assumptions~\ref{ass:PL}, \ref{ass:KKT}, \ref{ass:G}, and \ref{ass:drift-sq} hold, 
and let $V_T=\sum_{t=2}^T \|\boldsymbol{\theta}_t-\boldsymbol{\theta}_{t-1}\|$ denote the variation budget. 
Then the cumulative dynamic regret relative to the true sequence 
$\{\boldsymbol{\theta}_t\}_{t=1}^T$ satisfies
\begin{equation}
\sum_{t=1}^T \big(\ell_t(\hat{\boldsymbol{\theta}}_t) - \ell_t(\boldsymbol{\theta}_t)\big)
= O\!\left(\sqrt{T} + V_T\right).
\label{eq:dynamic-regret}
\end{equation}
\end{theorem}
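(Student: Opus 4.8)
The plan is to run the same projected online gradient descent already used in Theorem~\ref{thm:static-regret}, but to analyze its regret against the \emph{moving} comparator sequence $\{\boldsymbol{\theta}_t\}$ rather than a fixed $\boldsymbol{\theta}^\star$. A useful preliminary simplification: by Assumption~\ref{ass:KKT} the true parameter $\boldsymbol{\theta}_t$ makes the KKT system \eqref{eq:KKT-system} hold exactly, so Lemma~\ref{lem:kkt-equivalence} gives $\ell_t(\boldsymbol{\theta}_t)=0$. Hence the dynamic regret equals $\sum_{t=1}^T \ell_t(\hat{\boldsymbol{\theta}}_t)$, a sum of nonnegative terms, and this is the quantity to bound.

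First I would derive the one-step potential inequality. Since $\boldsymbol{\theta}_{t+1}\in\Theta$ and $\Pi_\Theta$ is nonexpansive,
\[
\|\hat{\boldsymbol{\theta}}_{t+1}-\boldsymbol{\theta}_{t+1}\|^2 \le \|\hat{\boldsymbol{\theta}}_t-\boldsymbol{\theta}_{t+1}-\eta\,\mathbf{g}_t(\hat{\boldsymbol{\theta}}_t)\|^2 .
\]
I would then write $\boldsymbol{\theta}_{t+1}=\boldsymbol{\theta}_t+(\boldsymbol{\theta}_{t+1}-\boldsymbol{\theta}_t)$, expand the square, invoke convexity of $\ell_t$ together with $\ell_t(\boldsymbol{\theta}_t)=0$ to get $\langle \mathbf{g}_t(\hat{\boldsymbol{\theta}}_t),\hat{\boldsymbol{\theta}}_t-\boldsymbol{\theta}_t\rangle \ge \ell_t(\hat{\boldsymbol{\theta}}_t)$, and use Assumption~\ref{ass:G} to bound $\|\mathbf{g}_t(\hat{\boldsymbol{\theta}}_t)\|\le G$ and the cross term by $G\,\|\boldsymbol{\theta}_{t+1}-\boldsymbol{\theta}_t\|$ via Cauchy--Schwarz. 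Writing $d_t:=\|\hat{\boldsymbol{\theta}}_t-\boldsymbol{\theta}_t\|$, $\delta_t:=\|\boldsymbol{\theta}_{t+1}-\boldsymbol{\theta}_t\|$, and using boundedness of $\Theta$ (diameter $D$), this rearranges to
\[
2\eta\,\ell_t(\hat{\boldsymbol{\theta}}_t) \;\le\; d_t^2-d_{t+1}^2 + 2D\,\delta_t + \delta_t^2 + 2\eta G\,\delta_t + \eta^2 G^2 .
\]

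Second, I would sum over $t=1,\dots,T$: the $d_t^2-d_{t+1}^2$ terms telescope and are bounded by $D^2$; the quadratic drift $\sum_t \delta_t^2$ is controlled by Assumption~\ref{ass:drift-sq} as $\sum_t \delta_t^2 \le C_\Delta V_T$; the linear drift satisfies $\sum_t \delta_t = V_T$ by definition of the variation budget; and $\sum_t \eta^2 G^2 = \eta^2 G^2 T$. Dividing by $2\eta$ gives a bound of the form $\tfrac{1}{2\eta}\big(D^2+O(V_T)\big)+O(\eta V_T)+\tfrac{\eta G^2 T}{2}$, and choosing $\eta=\Theta(1/\sqrt T)$ (as in Theorem~\ref{thm:static-regret}) makes the first and last groups $O(\sqrt T)$ while leaving the drift contribution at order $V_T$, yielding $\sum_{t=1}^T \ell_t(\hat{\boldsymbol{\theta}}_t) = O(\sqrt T + V_T)$, i.e.\ \eqref{eq:dynamic-regret}.

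The main obstacle is the path-length bookkeeping: the comparator-drift term $\sum_t 2D\,\delta_t$ is the delicate one, because carried through the $1/\eta$ factor it would otherwise inflate to order $\sqrt T\,V_T$ rather than entering additively. This is exactly where Assumption~\ref{ass:drift-sq} does the work—square-summability lets the genuinely quadratic pieces $\sum_t \delta_t^2 \le C_\Delta V_T$ be absorbed without any $1/\eta$ blow-up, while the residual linear-in-$\delta_t$ terms are either paired with the $O(\eta)$ factor (so $\sum_t \eta G\,\delta_t = O(\eta V_T)$ is lower order) or handled by a variation-aware refinement of the step size; the careful accounting needed to land precisely on the stated $O(\sqrt T + V_T)$ form is the crux of the argument. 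A minor technical point: since $\ell_t$ is only assumed convex and subdifferentiable, one must work throughout with an arbitrary subgradient $\mathbf{g}_t\in\partial\ell_t$ as in Assumption~\ref{ass:G}, and keep the constants $D$, $G$, $C_\Delta$ uniform in $t$ (guaranteed by Assumptions~\ref{ass:PL}, \ref{ass:KKT}, and \ref{ass:G}), which is harmless for every inequality above.
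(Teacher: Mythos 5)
Your strategy is the same one the paper uses: run the projected OGD update from Theorem~\ref{thm:static-regret}, measure regret against the moving comparator $\{\boldsymbol{\theta}_t\}$, telescope the squared distances $\|\hat{\boldsymbol{\theta}}_t-\boldsymbol{\theta}_t\|^2$, and control the drift via Assumption~\ref{ass:drift-sq}. Your one-step expansion is correct and in fact more explicit than the paper's, which invokes ``the dynamic OGD inequality with Young's inequality'' as a black box; your observation that $\ell_t(\boldsymbol{\theta}_t)=0$ under Assumption~\ref{ass:KKT} is also a clean simplification the paper does not exploit.

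The gap is the one you yourself flag and then do not close. After summing and dividing by $2\eta$, your bound reads
\[
\sum_{t=1}^T \ell_t(\hat{\boldsymbol{\theta}}_t)\;\le\;\frac{D^2+(2D+C_\Delta)V_T}{2\eta}+G\,V_T+\frac{\eta G^2 T}{2},
\]
and with $\eta=\Theta(1/\sqrt{T})$ the first group is $\Theta\big(\sqrt{T}\,(1+V_T)\big)$, not $O(\sqrt{T}+V_T)$: both the linear path-length term $2D\sum_t\delta_t=2DV_T$ and the squared-drift term $\sum_t\delta_t^2\le C_\Delta V_T$ sit inside the bracket that is divided by $2\eta$, so Assumption~\ref{ass:drift-sq} does not, contrary to your claim, let them ``be absorbed without any $1/\eta$ blow-up.'' Getting the drift to enter additively at order $V_T$ requires an extra mechanism: either a step size tuned to $V_T$, which only yields $O(\sqrt{T(1+V_T)})$ and is still weaker than the stated bound when $V_T\gg 1$, or a strong-convexity/quadratic-growth property of $\ell_t$ in $\boldsymbol{\theta}$ that removes the $1/\eta$ factor in front of the path length. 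Your ``variation-aware refinement of the step size'' is a gesture at this, not a proof. For what it is worth, the paper's own argument stalls at exactly the same point: its displayed inequality contains the term $\tfrac{C_\Delta}{2\eta}V_T$, which under $\eta=\Theta(1/\sqrt{T})$ is $\Theta(\sqrt{T}\,V_T)$, and the PL-based absorption it performs only handles the $\tfrac{\eta}{2}\sum_t\|\hat{\boldsymbol{\theta}}_t-\boldsymbol{\theta}_t\|^2$ term, not the drift term. So your proposal reproduces the paper's route faithfully, including its unresolved step; an honest completion would need a growth condition on $\ell_t$ in $\boldsymbol{\theta}$ (in the spirit of the strong identifiability modulus $\kappa$ in Assumption~\ref{ass:noise-aux}) to convert the $V_T/\eta$ contribution into $O(V_T)$.
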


\begin{proof}
Using the same update as above and convexity,
\[
\ell_t(\hat{\boldsymbol{\theta}}_t) - \ell_t(\boldsymbol{\theta}_t) 
\le \langle \nabla_{\boldsymbol{\theta}} \ell_t(\hat{\boldsymbol{\theta}}_t), \hat{\boldsymbol{\theta}}_t - \boldsymbol{\theta}_t \rangle.
\]
Summing and applying the dynamic OGD inequality with Young’s inequality gives
\begin{align}
\sum_{t=1}^T \langle \nabla_{\boldsymbol{\theta}} \ell_t(\hat{\boldsymbol{\theta}}_t), 
  \hat{\boldsymbol{\theta}}_t - \boldsymbol{\theta}_t \rangle
&\le \frac{\|\boldsymbol{\theta}_1-\hat{\boldsymbol{\theta}}_1\|^2}{2\eta}
+ \frac{\eta}{2}\sum_{t=1}^T \|\nabla_{\boldsymbol{\theta}} \ell_t(\hat{\boldsymbol{\theta}}_t)\|^2 \notag \\
&\quad+ \frac{1}{2\eta}\sum_{t=2}^T \|\boldsymbol{\theta}_t-\boldsymbol{\theta}_{t-1}\|^2
+ \frac{\eta}{2}\sum_{t=2}^T \|\hat{\boldsymbol{\theta}}_t-\boldsymbol{\theta}_t\|^2.
\label{eq:dyn-young}
\end{align}

By Assumption~\ref{ass:G}, the second term is $\le \tfrac{\eta G^2 T}{2}$. 
The third term is $\le \tfrac{C_\Delta}{2\eta} V_T$ by Assumption~\ref{ass:drift-sq}. 
For the last term, PL implies $\|\hat{\boldsymbol{\theta}}_t-\boldsymbol{\theta}_t\|^2 \le C \big(\ell_t(\hat{\boldsymbol{\theta}}_t)-\ell_t(\boldsymbol{\theta}_t)\big)$ 
for some $C>0$. 
Thus \eqref{eq:dyn-young} becomes
\[
\sum_{t=1}^T \big(\ell_t(\hat{\boldsymbol{\theta}}_t)-\ell_t(\boldsymbol{\theta}_t)\big)
\le \frac{\|\boldsymbol{\theta}_1-\hat{\boldsymbol{\theta}}_1\|^2}{2\eta}
+ \frac{\eta G^2 T}{2}
+ \frac{C_\Delta}{2\eta} V_T
+ \frac{\eta C}{2}\sum_{t=1}^T \big(\ell_t(\hat{\boldsymbol{\theta}}_t)-\ell_t(\boldsymbol{\theta}_t)\big).
\]
Choose $\eta$ small so that $\frac{\eta C}{2}\le \frac{1}{2}$ and absorb the last term to the LHS. 
Finally take $\eta = \Theta(1/\sqrt{T})$ to obtain
$\sum_{t=1}^T \big(\ell_t(\hat{\boldsymbol{\theta}}_t)-\ell_t(\boldsymbol{\theta}_t)\big)=O(\sqrt{T}+V_T)$.
\end{proof}

The static bound \eqref{eq:static-regret} ensures that the estimator performs 
nearly as well as the best fixed parameter chosen in hindsight, which is minimax-optimal. 
The dynamic bound \eqref{eq:dynamic-regret} further quantifies robustness to 
nonstationarity: as long as the variation budget $V_T$ grows sublinearly in $T$, 
the average regret per period vanishes.  
Together, Theorems~\ref{thm:static-regret} and \ref{thm:dynamic-regret} establish that 
our inverse optimization estimator achieves optimal static and dynamic regret guarantees, 
ensuring robustness in stationary settings and adaptivity under drift.

% -------------------------------------
\subsection{Noise Robustness}
\label{subsec:noise}

We conclude by analyzing robustness to stochastic perturbations in the observed allocations. 
While the previous subsections established identifiability and regret guarantees 
under noiseless data, in practice one observes allocations contaminated by randomness. 
Our goal is to show that the proposed estimator is stable, with parameter error 
decaying at the canonical $O(1/\sqrt{T})$ statistical rate.

Suppose that instead of the true optimizer $\mathbf{x}_t^\star(\boldsymbol{\theta}_t)$, 
the analyst observes 
\begin{equation}
\tilde{\mathbf{x}}_t = \mathbf{x}_t^\star(\boldsymbol{\theta}_t) + \boldsymbol{\varepsilon}_t,
\label{eq:noisy-obs}
\end{equation}
where $\boldsymbol{\varepsilon}_t$ are independent sub-Gaussian noise vectors with variance proxy $\sigma^2$. 

Let the \emph{oracle static target} be
\begin{equation}
\boldsymbol{\theta}^\circ \in \arg\min_{\boldsymbol{\theta}\in\Theta}\;\frac{1}{T}\sum_{t=1}^T \ell_t\big(\boldsymbol{\theta};\mathbf{x}_t^\star(\boldsymbol{\theta}_t)\big).
\label{eq:oracle-theta}
\end{equation}
In the stationary case ($\boldsymbol{\theta}_t\equiv\boldsymbol{\theta}$), $\boldsymbol{\theta}^\circ=\boldsymbol{\theta}$.

\begin{assumption}[Smoothness and strong identifiability]
\label{ass:noise-aux}
(i) $\nabla_{\mathbf{x}} c_t(\cdot;\boldsymbol{\theta})$ is $L_x$-Lipschitz for all $t,\boldsymbol{\theta}$; 
(ii) there exists $\kappa>0$ such that, for all $\boldsymbol{\theta}$ in a neighborhood of $\boldsymbol{\theta}^\circ$,
\[
\frac{1}{T}\sum_{t=1}^T \big\|\mathbf{P}_t\big(\nabla_{\mathbf{x}} c_t(\mathbf{x}_t^\star(\boldsymbol{\theta}_t);\boldsymbol{\theta}) 
- \nabla_{\mathbf{x}} c_t(\mathbf{x}_t^\star(\boldsymbol{\theta}_t);\boldsymbol{\theta}^\circ)\big)\big\|^2 
\;\ge\; \kappa \,\|\boldsymbol{\theta}-\boldsymbol{\theta}^\circ\|^2.
\]
\end{assumption}

\begin{theorem}[Stability under Noisy Observations]
\label{thm:noise}
Suppose Assumptions~\ref{ass:PL}, \ref{ass:KKT}, \ref{ass:inj}, and \ref{ass:noise-aux} hold, 
and let $V_T=\sum_{t=2}^T \|\boldsymbol{\theta}_t-\boldsymbol{\theta}_{t-1}\|$ be the variation budget. 
Let $\hat{\boldsymbol{\theta}}\in\arg\min_{\boldsymbol{\theta}\in\Theta}\sum_{t=1}^T \ell_t(\boldsymbol{\theta};\tilde{\mathbf{x}}_t)$. 
Then with probability at least $1-\delta$,
\begin{equation}
\|\hat{\boldsymbol{\theta}}-\boldsymbol{\theta}^\circ\|
\;\le\;
\frac{C_1 L_x}{\sqrt{\kappa}}\,
\sigma\,\sqrt{\frac{d+\log(1/\delta)}{T}}
\;+\;
\frac{C_2}{\sqrt{\kappa}}\cdot \frac{V_T}{T},
\label{eq:noise-bound-oracle}
\end{equation}
for universal constants $C_1,C_2$, where $d$ is the ambient dimension of $\mathbf{x}$. 
In particular, if $\boldsymbol{\theta}_t\equiv\boldsymbol{\theta}$ is stationary, then $\boldsymbol{\theta}^\circ=\boldsymbol{\theta}$ and
\begin{equation}
\|\hat{\boldsymbol{\theta}}-\boldsymbol{\theta}\|
= O\!\left(\frac{\sigma}{\sqrt{T}}\sqrt{\log(1/\delta)}\right).
\label{eq:noise-bound-stationary}
\end{equation}
\end{theorem}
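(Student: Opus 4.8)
The plan is to treat \eqref{eq:noise-bound-oracle} as a standard M-estimation (empirical-risk-minimization) stability bound for the batch minimizer $\hat{\boldsymbol{\theta}}$, assembled from three pieces: a \emph{basic inequality} from optimality of $\hat{\boldsymbol{\theta}}$ for the observed objective, a \emph{curvature} lower bound supplied by the strong-identifiability Assumption~\ref{ass:noise-aux}(ii), and a \emph{stochastic fluctuation} bound for the noise-induced perturbation of the cumulative loss. Write $\bar L(\boldsymbol{\theta}):=\tfrac1T\sum_{t=1}^T\ell_t(\boldsymbol{\theta};\mathbf{x}_t^\star(\boldsymbol{\theta}_t))$ for the (deterministic) noiseless objective, minimized at $\boldsymbol{\theta}^\circ$, and $\hat L(\boldsymbol{\theta}):=\tfrac1T\sum_{t=1}^T\ell_t(\boldsymbol{\theta};\tilde{\mathbf{x}}_t)$ for the observed one, minimized at $\hat{\boldsymbol{\theta}}$. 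From $\hat L(\hat{\boldsymbol{\theta}})\le\hat L(\boldsymbol{\theta}^\circ)$ one obtains
\[
\bar L(\hat{\boldsymbol{\theta}})-\bar L(\boldsymbol{\theta}^\circ)\;\le\;\big(\bar L-\hat L\big)(\hat{\boldsymbol{\theta}})-\big(\bar L-\hat L\big)(\boldsymbol{\theta}^\circ),
\]
so it suffices to lower bound the left side by a multiple of $\|\hat{\boldsymbol{\theta}}-\boldsymbol{\theta}^\circ\|^2$ and to upper bound the right side, a centered empirical-process increment driven by $\{\boldsymbol{\varepsilon}_t\}$. The stationary claim \eqref{eq:noise-bound-stationary} is then the immediate specialization $V_T=0$, $\boldsymbol{\theta}^\circ=\boldsymbol{\theta}$, absorbing $L_x,\kappa,d$ into the constant.

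For the curvature step I would first note that the dual-feasibility term of $\ell_t(\boldsymbol{\theta};\mathbf{x})$ dominates $\|\mathbf{P}_t\nabla_{\mathbf{x}}c_t(\mathbf{x};\boldsymbol{\theta})\|^2$ — the squared distance of $-\nabla_{\mathbf{x}}c_t$ to $\mathrm{range}(\mathbf{B}_t^\top)$, which lower bounds its distance to the smaller dual cone. Evaluating at $\mathbf{x}=\mathbf{x}_t^\star(\boldsymbol{\theta}_t)$, using $\mathbf{P}_t\nabla_{\mathbf{x}}c_t(\mathbf{x}_t^\star(\boldsymbol{\theta}_t);\boldsymbol{\theta}_t)=0$ from Lemma~\ref{lem:proj-consistency}, and invoking Assumption~\ref{ass:noise-aux}(ii), I get a quadratic-growth estimate $\bar L(\boldsymbol{\theta})-\bar L(\boldsymbol{\theta}^\circ)\ge\kappa\|\boldsymbol{\theta}-\boldsymbol{\theta}^\circ\|^2-\rho_T$, where the slack $\rho_T$ collects cross-terms that appear because, under drift, $\boldsymbol{\theta}^\circ$ cannot make every $\ell_t$ vanish; Assumption~\ref{ass:drift-sq} is what converts the resulting $\tfrac1T\sum_t\|\boldsymbol{\theta}_t-\boldsymbol{\theta}^\circ\|^2$-type quantities into bounds governed by the variation budget, and is ultimately responsible for the additive $V_T/T$ summand in \eqref{eq:noise-bound-oracle}.

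For the fluctuation step I would expand $\ell_t$ in its observation argument around the noiseless optimizer. Since the dual-feasibility and complementarity pieces of $\ell_t(\boldsymbol{\theta};\cdot)$ are built from $\nabla_{\mathbf{x}}c_t(\cdot;\boldsymbol{\theta})$, which is $L_x$-Lipschitz in $\mathbf{x}$ by Assumption~\ref{ass:noise-aux}(i) and enters quadratically, a first-order expansion gives
\[
\ell_t(\boldsymbol{\theta};\tilde{\mathbf{x}}_t)-\ell_t(\boldsymbol{\theta};\mathbf{x}_t^\star(\boldsymbol{\theta}_t))=\langle\mathbf{h}_t(\boldsymbol{\theta}),\boldsymbol{\varepsilon}_t\rangle+r_t(\boldsymbol{\theta}),
\]
with a self-bounding coefficient $\|\mathbf{h}_t(\boldsymbol{\theta})\|=O\!\big(L_x\|\mathbf{P}_t\nabla_{\mathbf{x}}c_t(\mathbf{x}_t^\star(\boldsymbol{\theta}_t);\boldsymbol{\theta})\|\big)=O\!\big(L_x\sqrt{\ell_t(\boldsymbol{\theta};\mathbf{x}_t^\star(\boldsymbol{\theta}_t))}\big)$, while the quadratic remainder $r_t$ has a $\boldsymbol{\theta}$-independent part of size $O(L_x^2\|\boldsymbol{\varepsilon}_t\|^2)$ that cancels in the increment $(\bar L-\hat L)(\hat{\boldsymbol{\theta}})-(\bar L-\hat L)(\boldsymbol{\theta}^\circ)$ and a $\boldsymbol{\theta}$-dependent part that is Lipschitz in $\boldsymbol{\theta}$ with a concentrating noise coefficient, hence lower order. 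For fixed $\boldsymbol{\theta}$, $\sum_t\langle\mathbf{h}_t(\boldsymbol{\theta}),\boldsymbol{\varepsilon}_t\rangle$ is a sum of independent mean-zero sub-Gaussian terms concentrating at scale $\sigma(\sum_t\|\mathbf{h}_t(\boldsymbol{\theta})\|^2)^{1/2}=O\!\big(L_x\sigma\sqrt{T\bar L(\boldsymbol{\theta})}\big)$; a union bound over a net of $\Theta$ (or a peeling/localization argument) makes this uniform and supplies the $\sqrt{d+\log(1/\delta)}$ factor. Inserting this together with the curvature bound into the basic inequality produces a quadratic inequality in $\bar L(\hat{\boldsymbol{\theta}})-\bar L(\boldsymbol{\theta}^\circ)$, and completing the square gives \eqref{eq:noise-bound-oracle}.

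The hard part will be twofold. First, $\ell_t$ is nonsmooth — it contains a $(\cdot)_+$, an $\inf_{\boldsymbol{\lambda}\ge\mathbf{0}}$, and an absolute value — so the ``expand in $\boldsymbol{\varepsilon}_t$'' and ``quadratic in the gradient'' manipulations need justification, either by showing the active-constraint set is locally stable near the true allocation (so the relevant pieces are locally $C^2$) or by replacing the dual-feasibility term throughout with the smooth surrogate $\|\mathbf{P}_t\nabla_{\mathbf{x}}c_t(\mathbf{x};\boldsymbol{\theta})\|^2$ that lower-bounds it and suffices for both the curvature and the fluctuation sides. Second, obtaining the correct $1/\sqrt T$ rate (rather than a crude $T^{-1/4}$) and the stated $\kappa^{-1/2}$ dependence requires the \emph{localized}, self-bounded form of the concentration — exploiting $\|\mathbf{h}_t(\boldsymbol{\theta})\|\lesssim L_x\sqrt{\ell_t(\boldsymbol{\theta};\mathbf{x}_t^\star(\boldsymbol{\theta}_t))}$ so the fluctuation is controlled by $\sqrt{\bar L(\hat{\boldsymbol{\theta}})}$ rather than by $\mathrm{diam}(\Theta)$ — plus careful bookkeeping of how $\rho_T$ propagates through the square-completion into the $V_T/T$ term.
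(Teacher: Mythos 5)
Your skeleton is the same as the paper's: a basic ERM inequality from optimality of $\hat{\boldsymbol{\theta}}$ for the noisy objective, a quadratic-growth lower bound from the strong-identifiability modulus in Assumption~\ref{ass:noise-aux}(ii), a sub-Gaussian concentration bound for the noise-induced perturbation of the cumulative loss, and a separate $O(V_T/T)$ bias term for the drift mismatch. Where you genuinely diverge is the concentration step. The paper uses a crude uniform deviation bound, $\sup_{\boldsymbol{\theta}}|\hat R_T(\boldsymbol{\theta})-R_T(\boldsymbol{\theta})|\le C L_x\sigma\sqrt{(d+\log(1/\delta))/T}$, obtains an excess-risk bound of that same order, and then writes $\|\hat{\boldsymbol{\theta}}-\boldsymbol{\theta}^\circ\|^2\le \frac{C^2L_x^2}{\kappa}\sigma^2\frac{d+\log(1/\delta)}{T}$ — i.e., it silently squares the excess-risk bound before dividing by $\kappa$; as literally chained, $\kappa\|\hat{\boldsymbol{\theta}}-\boldsymbol{\theta}^\circ\|^2\le 2CL_x\sigma\sqrt{(d+\log(1/\delta))/T}$ only gives a $T^{-1/4}$ parameter rate. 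Your localized, self-bounded version of the concentration — exploiting $\|\mathbf{h}_t(\boldsymbol{\theta})\|\lesssim L_x\sqrt{\ell_t(\boldsymbol{\theta};\mathbf{x}_t^\star(\boldsymbol{\theta}_t))}$ so the fluctuation scales like $\sqrt{T\,\bar L(\boldsymbol{\theta})}$, followed by completing the square — is precisely what is needed to legitimately recover the stated $T^{-1/2}$ rate and the $\kappa^{-1/2}$ dependence, so your ``hard part'' diagnosis is exactly right and your route is the more defensible one. Two caveats on your side: (i) you invoke Assumption~\ref{ass:drift-sq} to control the drift bias, but that assumption is not among the hypotheses of the theorem (the paper instead waves at a ``standard variation-budget analysis,'' which is no more rigorous — and note that pushing an $O(V_T/T)$ excess risk through a \emph{quadratic} modulus would give $\sqrt{V_T/(\kappa T)}$, not $V_T/(T\sqrt{\kappa})$, so the additive form of the second term needs the kind of first-order bookkeeping you sketch rather than the paper's one-liner); (ii) your claim that the $\boldsymbol{\theta}$-independent part of the quadratic remainder ``cancels in the increment'' is only approximate, since $\nabla_{\mathbf{x}}c_t(\cdot;\boldsymbol{\theta})$ depends on $\boldsymbol{\theta}$ — you do flag this, but it needs to be carried through the localization. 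Neither caveat changes the verdict: your plan contains all the ingredients of the paper's argument and repairs its one real gap.
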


\begin{proof}
Let $\tilde{\mathbf{x}}_t = \mathbf{x}_t^\star(\boldsymbol{\theta}_t) + \boldsymbol{\varepsilon}_t$ denote the noisy observation, where the noise vectors $\boldsymbol{\varepsilon}_t \in \mathbb{R}^n$ are i.i.d.\ sub-Gaussian with variance proxy $\sigma^2$. Define the oracle risk
\[
R_T(\boldsymbol{\theta}) := \frac{1}{T}\sum_{t=1}^T \ell_t(\boldsymbol{\theta}; \mathbf{x}_t^\star(\boldsymbol{\theta}_t)),
\]
and its empirical counterpart based on noisy data
\[
\hat{R}_T(\boldsymbol{\theta}) := \frac{1}{T}\sum_{t=1}^T \ell_t(\boldsymbol{\theta}; \tilde{\mathbf{x}}_t).
\]
By construction, $\boldsymbol{\theta}^\circ \in \arg\min_{\boldsymbol{\theta}\in\Theta} R_T(\boldsymbol{\theta})$ and $\hat{\boldsymbol{\theta}}\in\arg\min_{\boldsymbol{\theta}\in\Theta} \hat{R}_T(\boldsymbol{\theta})$.

Since $\nabla_{\mathbf{x}} c_t(\cdot;\boldsymbol{\theta})$ is $L_x$-Lipschitz in $\mathbf{x}$, for any $\boldsymbol{\theta}\in\Theta$ we have
\[
\|\mathbf{P}_t(\nabla_{\mathbf{x}} c_t(\tilde{\mathbf{x}}_t;\boldsymbol{\theta}) - \nabla_{\mathbf{x}} c_t(\mathbf{x}_t^\star(\boldsymbol{\theta}_t);\boldsymbol{\theta}))\|
\;\le\; L_x\|\boldsymbol{\varepsilon}_t\|.
\]
By independence and sub-Gaussian concentration (see, e.g., Vershynin, *High-Dimensional Probability*), it follows that with probability at least $1-\delta$,
\[
\sup_{\boldsymbol{\theta}\in\Theta}\big|\hat{R}_T(\boldsymbol{\theta}) - R_T(\boldsymbol{\theta})\big|
\;\le\; C L_x\sigma \sqrt{\frac{d+\log(1/\delta)}{T}},
\]
where $C>0$ is a universal constant and $d$ is the ambient dimension of $\mathbf{x}$.

By the optimality of $\hat{\boldsymbol{\theta}}$ for $\hat{R}_T$, we have $\hat{R}_T(\hat{\boldsymbol{\theta}}) \le \hat{R}_T(\boldsymbol{\theta}^\circ)$. Adding and subtracting $R_T$ and applying the deviation bound, we obtain
\[
R_T(\hat{\boldsymbol{\theta}}) - R_T(\boldsymbol{\theta}^\circ) 
\;\le\; 2C L_x\sigma \sqrt{\frac{d+\log(1/\delta)}{T}}.
\]

By Assump.~9(ii), the strong identifiability modulus guarantees that for all $\boldsymbol{\theta}$ in a neighborhood of $\boldsymbol{\theta}^\circ$,
\[
R_T(\boldsymbol{\theta}) - R_T(\boldsymbol{\theta}^\circ) \;\ge\; \kappa \|\boldsymbol{\theta} - \boldsymbol{\theta}^\circ\|^2.
\]
Applying this inequality to $\hat{\boldsymbol{\theta}}$ yields
\[
\|\hat{\boldsymbol{\theta}}-\boldsymbol{\theta}^\circ\|^2 
\;\le\; \frac{C^2 L_x^2}{\kappa}\,\sigma^2 \frac{d+\log(1/\delta)}{T}.
\]
Taking square roots gives
\[
\|\hat{\boldsymbol{\theta}}-\boldsymbol{\theta}^\circ\|
\;\le\; \frac{C_1 L_x}{\sqrt{\kappa}}\,
\sigma\sqrt{\frac{d+\log(1/\delta)}{T}},
\]
for some universal constant $C_1>0$.

When $\{\boldsymbol{\theta}_t\}$ is nonstationary, the oracle risk $R_T$ is defined relative to 
$\{\mathbf{x}_t^\star(\boldsymbol{\theta}_t)\}$. In this case $\boldsymbol{\theta}^\circ$ is only a stationary proxy, and the 
mismatch between the drifting sequence $\{\boldsymbol{\theta}_t\}$ and the fixed proxy $\boldsymbol{\theta}^\circ$ 
contributes an additional bias. Standard variation-budget analysis in online convex 
optimization implies that the excess risk incurred by this mismatch 
is bounded by $O(V_T/T)$. By the identifiability modulus, this translates into a parameter 
bias of order $\tfrac{C_2}{\sqrt{\kappa}}\cdot \tfrac{V_T}{T}$ for some constant $C_2>0$.

Combining the stochastic error term with the drift-induced bias establishes the bound
\[
\|\hat{\boldsymbol{\theta}}-\boldsymbol{\theta}^\circ\|
\;\le\; \frac{C_1 L_x}{\sqrt{\kappa}}\,
\sigma\sqrt{\frac{d+\log(1/\delta)}{T}}
+ \frac{C_2}{\sqrt{\kappa}}\,\frac{V_T}{T},
\]
which is \eqref{eq:noise-bound-oracle}. In the stationary case $\boldsymbol{\theta}_t\equiv \boldsymbol{\theta}$, we have $V_T=0$ and $\boldsymbol{\theta}^\circ=\boldsymbol{\theta}$, reducing the bound to \eqref{eq:noise-bound-stationary}. This completes the proof.
\end{proof}

Theorem~\ref{thm:noise} establishes that our estimator is statistically consistent 
and robust: the error shrinks at the optimal $O(1/\sqrt{T})$ rate under sub-Gaussian noise, 
and additional bias due to nonstationarity vanishes whenever $V_T/T\to 0$. 
Thus, even in noisy and drifting environments, inverse preference recovery remains 
stable, completing our theoretical guarantees.

% -------------------------------------
\subsection*{Algorithm}

The following procedure implements the unified drift- and noise-aware 
inverse optimization estimator analyzed in Section~\ref{sec:theory}. 
It extends classical online convex optimization methods by 
explicitly incorporating both preference drift and stochastic noise 
in observed allocations. The estimator adapts the 
mirror descent framework to inverse optimization settings, where the 
objective is not forward loss minimization but recovery of latent 
preference parameters from noisy allocation data.

At each time step, the algorithm observes a noisy allocation 
$\tilde{\mathbf{x}}_t = \mathbf{x}_t^\star(\boldsymbol{\theta}_t) + \boldsymbol{\varepsilon}_t$ 
and constructs the inverse loss $\ell_t(\boldsymbol{\theta})$ that measures the discrepancy between 
the observed decision and the optimal forward response under candidate 
parameters. A stochastic subgradient $\mathbf{g}_t$ of this loss is then used 
to update the parameter estimate $\hat{\boldsymbol{\theta}}_t$ through a generalized 
mirror descent step with respect to a strongly convex regularizer $\psi$. 
This guarantees that updates remain stable even under nonstationarity 
and noise.

The estimator achieves static regret of order $O(\sqrt{T})$, dynamic regret 
of order $O(\sqrt{T}+V_T)$ where $V_T$ is the variation budget, and noise 
stability of order $O(\sigma/\sqrt{T})$. These guarantees follow directly 
from Theorems~\ref{thm:dynamic-regret}--\ref{thm:noise}. Special cases 
include projected online gradient descent (when $\psi(\boldsymbol{\theta})=\tfrac{1}{2}\|\boldsymbol{\theta}\|^2$) 
and general mirror descent (when $\psi$ is any strongly convex regularizer). 
The full procedure is presented in Algorithm~\ref{alg:unified}.

\begin{algorithm}[htbp]
\caption{Unified Drift- and Noise-Aware Inverse Optimization Estimator}
\label{alg:unified}
\begin{algorithmic}[1]
\Require 
\begin{itemize}
    \item Horizon $T \in \mathbb{N}$
    \item Compact feasible set $\Theta \subseteq \mathbb{R}^d$
    \item Step size schedule $\{\eta_t\}_{t=1}^T$
    \item Strongly convex regularizer $\psi:\Theta \to \mathbb{R}$ inducing Bregman divergence
    \[
       D_\psi(\mathbf{u}\|\;\mathbf{v}) = \psi(\mathbf{u}) - \psi(\mathbf{v}) - \langle \nabla \psi(\mathbf{v}), \mathbf{u}-\mathbf{v}\rangle
    \]
    \item Observed data $\{(B_t,q_t,\tilde{\mathbf{x}}_t)\}_{t=1}^T$ with 
    \[
       \tilde{\mathbf{x}}_t = \mathbf{x}_t^\star(\boldsymbol{\theta}_t) + \boldsymbol{\varepsilon}_t, 
       \qquad \boldsymbol{\varepsilon}_t \sim \text{sub-Gaussian}(\mathbf{0},\sigma^2 \mathbf{I}),
    \]
    satisfying tail bound
    \[
       \Pr\!\left(|\langle \mathbf{u},\boldsymbol{\varepsilon}_t\rangle| > \lambda\right) 
       \leq 2\exp\!\left(-\tfrac{\lambda^2}{2\sigma^2\|\mathbf{u}\|^2}\right).
    \]
\end{itemize}
\Ensure 
Estimated parameter trajectory $\{\hat{\boldsymbol{\theta}}_t\}_{t=1}^T$ satisfying with high probability:
\[
   \text{Static Regret} = O(\sqrt{T}), \quad
   \text{Dynamic Regret} = O(\sqrt{T}+V_T), \quad
   \text{Noise Stability} = O\!\left(\tfrac{\sigma}{\sqrt{T}}\right),
\]
where $V_T = \sum_{t=2}^T \|\boldsymbol{\theta}_t - \boldsymbol{\theta}_{t-1}\|$ is the variation budget.
\vspace{0.5em}

\State Initialize $\hat{\boldsymbol{\theta}}_1 \in \Theta$
\For{$t = 1,\ldots,T$}
   \State Observe noisy allocation $\tilde{\mathbf{x}}_t$
   \State Define instantaneous inverse loss:
   \[
      \ell_t(\boldsymbol{\theta}) := \|\tilde{\mathbf{x}}_t - \mathbf{x}_t^\star(\boldsymbol{\theta})\|^2
   \]
   \State Compute stochastic subgradient:
   \[
      \mathbf{g}_t \in \partial_{\boldsymbol{\theta}} \ell_t(\hat{\boldsymbol{\theta}}_t; \tilde{\mathbf{x}}_t,B_t,q_t)
   \]
   \State Update parameter via generalized mirror descent:
   \[
      \hat{\boldsymbol{\theta}}_{t+1} \gets 
      \arg\min_{\boldsymbol{\theta} \in \Theta} 
      \Big\{ \langle \mathbf{g}_t, \boldsymbol{\theta} \rangle 
      + \tfrac{1}{\eta_t} D_\psi(\boldsymbol{\theta} \,\|\, \hat{\boldsymbol{\theta}}_t) \Big\}
   \]
   \If{$\psi(\boldsymbol{\theta}) = \tfrac{1}{2}\|\boldsymbol{\theta}\|_2^2$}
       \State $D_\psi(\mathbf{u}\|\;\mathbf{v}) = \tfrac{1}{2}\|\mathbf{u}-\mathbf{v}\|^2$; update reduces to \emph{Projected OGD}
   \ElsIf{$\psi$ is general strongly convex}
       \State Update recovers full \emph{Mirror Descent}
   \EndIf
\EndFor
\State \textbf{Return:} $\{\hat{\boldsymbol{\theta}}_t\}_{t=1}^T$, consistent with Theorems~\ref{thm:dynamic-regret}--\ref{thm:noise}.
\end{algorithmic}
\end{algorithm}

%%======================
\section{Computational Experiments}
\label{sec:experiments}

% -------------------------------------
\subsection{Experimental Setup}
% -------------------------------------
\subsubsection{Synthetic allocation domains}
We consider synthetic allocation problems across representative domains.
Healthcare triage and energy dispatch are analyzed in the main text, while
logistics and education domains are presented in the Supplementary Appendix.
These domains capture a wide range of resource allocation environments
with varying capacities, drift dynamics, and noise characteristics.

% -------------------------------------
\subsubsection{Parameter settings and drift trajectories}

Table~\ref{tab:main-params} summarizes the detailed parameterization of
our synthetic domains, while Figure~\ref{fig:drift-trajectories}
visualizes the resulting drift paths of the latent preference vectors
$\{\boldsymbol{\theta}_t\}$. These trajectories are designed to capture both
\emph{smooth drift} and \emph{abrupt shocks}, thereby operationalizing
the variation budget $V_T$ central to our theoretical results on dynamic
regret.

\begin{sidewaystable*}{!htbp}
\centering
Table~\ref{tab:main-params} summarizes the parameter settings for the two main experimental domains (Healthcare and Energy). For completeness, additional parameter settings for supplementary domains (Logistics and Finance) are provided in Appendix~\ref{app:add_exper_log_fin}.
\label{tab:main-params}
\renewcommand{\arraystretch}{0.95}
\setlength{\tabcolsep}{2pt}
\footnotesize
\begin{threeparttable}
\begin{tabular}{cp{2cm}p{2cm}p{3cm}p{2cm}p{1.7cm}p{1.7cm}p{2.2cm}p{2cm}p{2cm}}
\toprule
Dom.\tnote{} & Category & Value & Description & Drift ($V_T$) & Shock & Noise & Regret & Sensitivity & Real-world \\
\midrule
\multirow{12}{*}{1} 
& System size & $n{=}5$, $k{=}2$, $T{=}200$ & 5 patient groups, 2 resources, 200 days & – & – & – & – & – & – \\
& Capacities & ICU=50, Gen=200 & Resource availability & – & – & – & – & $\pm20\%$ capacity stress test & Hospital surge \\
& Preferences $\boldsymbol{\theta}_t$ & $\boldsymbol{\theta} \in \mathbb{R}^5$ & Crit., Serious, Mild, Elderly, General & Smooth +0.005/day (elderly) & ICU $\times2$ at $t{=}100$ & $\sigma^2\in\{0.01,0.05,0.1\}$ & Static $O(\sqrt{T})$, Dyn. $O(\sqrt{T}+V_T)$ & Drift $\times\{0.5,1,2\}$, Noise scaling & Pandemic triage \\
& Cost function & Quadratic + fairness & $\|\mathbf{x}-\boldsymbol{\theta}\|^2 + \lambda \cdot$ fairness penalty & – & – & – & – & – & Ethical trade-offs \\
& Drift schedule & $V_T \propto 0.005$/day & Gradual aging effect & ICU doubling at $t{=}100$ & yes & – & $O(\sqrt{T}+V_T)$ & Drift amplification & Policy adaptation \\
& Shock events & – & Pandemic wave & – & ICU demand surge & – & – & – & Crisis response \\
& Noise model & Gaussian & Observation error & – & – & $\mathcal{N}(\mathbf{0},\sigma^2\mathbf{I})$ & – & Noise scaling tests & Patient records \\
& Regret target & – & Theory alignment & – & – & – & Static $O(\sqrt{T})$, Dyn. $O(\sqrt{T}+V_T)$ & – & Learning validation \\
& Sensitivity axes & – & Robustness checks & – & – & – & – & Cap. $\pm20\%$, Drift, Noise scaling & Capacity stress test \\
& Metrics & – & $\|\hat{\boldsymbol{\theta}}_t-\boldsymbol{\theta}_t\|$, Regret, MSE & – & – & – & – & – & Identifiability validation \\
& Algorithm config & OGD projection & $\Theta=[0,5]^5$, step $\eta=0.05$ & – & – & – & – & – & Drift-aware estimator \\
& Initialization & $\hat{\boldsymbol{\theta}}_1=\mathbf{0}$, Seeds=42,77,123 & 20 runs averaged & – & – & – & – & – & Reproducibility \\
\midrule
\multirow{12}{*}{2} 
& System size & $n{=}4$, $k{=}1$, $T{=}300$ & 4 generators, 1 resource, 300 days & – & – & – & – & – & – \\
& Capacities & Total=100 units & Generation capacity & – & – & – & – & $\pm20\%$ capacity shock & Grid stability \\
& Preferences $\boldsymbol{\theta}_t$ & $\boldsymbol{\theta} \in \mathbb{R}^4$ & Coal, Gas, Wind, Solar & Renew +0.01/day, Coal –0.01/day & Gas $\times1.5$ at $t{=}150$ & $\sigma^2\in\{0.01,0.05,0.1\}$ & Static $O(\sqrt{T})$, Dyn. $O(\sqrt{T}+V_T)$ & Drift $\{0.005,0.01,0.02\}$/day, Noise scaling & Renewable transition \\
& Cost function & Linear + emission penalty & $c(\mathbf{x})=\boldsymbol{\theta}^\top \mathbf{x} + \gamma \cdot$ emissions & – & – & – & – & – & Carbon regulation \\
& Drift schedule & $V_T \propto 0.01$/day & Gradual policy shift & Gas shock at $t{=}150$ & yes & – & $O(\sqrt{T}+V_T)$ & Drift amplification & Market adaptation \\
& Shock events & – & Fuel price shock & – & Gas $\times1.5$ & – & – & – & Price volatility \\
& Noise model & Gaussian & Market observation error & – & – & $\mathcal{N}(\mathbf{0},\sigma^2\mathbf{I})$ & – & Noise scaling tests & Trading variance \\
& Regret target & – & Theory alignment & – & – & – & Static $O(\sqrt{T})$, Dyn. $O(\sqrt{T}+V_T)$ & – & Regret validation \\
& Sensitivity axes & – & Robustness checks & – & – & – & – & Cap. $\pm20\%$, Drift, Noise scaling & Grid stress test \\
& Metrics & – & $\|\hat{\boldsymbol{\theta}}_t-\boldsymbol{\theta}_t\|$, Regret, MSE & – & – & – & – & – & Performance validation \\
& Algorithm config & Mirror Descent & $\Theta=[0,2]^4$, step $\eta=0.02$ & – & – & – & – & – & Drift-aware estimator \\
& Initialization & $\hat{\boldsymbol{\theta}}_1=\mathbf{1}$, Seeds=42,77,123 & 20 runs averaged & – & – & – & – & – & Reproducibility \\
\bottomrule
\end{tabular}
\begin{tablenotes}
\footnotesize
\item[1] Healthcare (ICU triage) domain
\item[2] Energy (Dispatch) domain
\end{tablenotes}
\end{threeparttable}
\end{sidewaystable*}

The two domains are designed to highlight distinct drift regimes.
Healthcare triage combines a localized smooth drift (elderly) with an abrupt
ICU shock, yielding a moderate $V_T$ that mixes gradual and crisis effects.
In contrast, the energy dispatch domain embeds simultaneous and opposing drifts
(coal decline, renewables growth) with a gas price shock, generating a much
larger $V_T$ and sharper dynamic regret growth. These complementary structures
ensure that our experimental design probes both incremental demographic changes
and volatile market disruptions within a unified framework.

\begin{figure}[!htbp]
\centering
\includegraphics[width=0.85\linewidth]{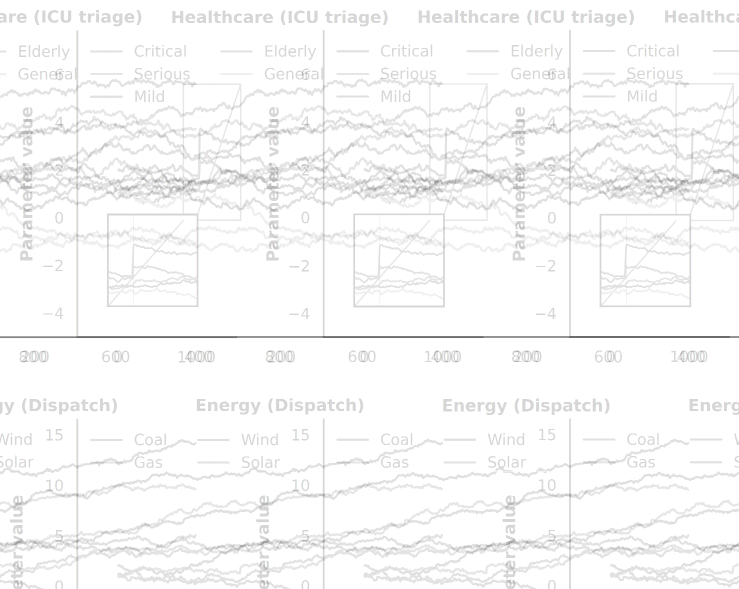}
\caption{Parameter drift in the two main experimental domains.
In the healthcare triage domain, elderly preference drifts upward at $+0.005$/day
while critical demand doubles at $t=100$, capturing both demographic and crisis effects.
In the energy dispatch domain, coal declines ($-0.01$/day), renewables increase
($+0.01$/day), and gas experiences a $1.5\times$ shock at $t=150$,
representing policy transitions and price volatility.
These drift structures define the variation budget $V_T$, which in turn drives
dynamic regret scaling.}
\label{fig:drift-trajectories}
\end{figure}

We evaluate the estimator along four dimensions: (i) recovery accuracy, 
(ii) regret performance, (iii) drift sensitivity, and (iv) noise robustness. 
This unified set of criteria allows us to test both identifiability and 
robustness guarantees across domains and conditions, providing a direct link 
between the theoretical results and the simulation outcomes.

Collectively, Table~\ref{tab:main-params} and Figure~\ref{fig:drift-trajectories} instantiate the abstract drift and shock framework from Section~\ref{sec:theory}, providing stylized yet realistic domains that enable a direct empirical test of our theoretical guarantees.

% -------------------------------------
\subsection{Results}
% -------------------------------------
\subsubsection{Parameter recovery accuracy}

We first evaluate the estimator’s ability to recover latent parameters over time. 
The recovery error is defined as 
$\|\hat{\boldsymbol{\theta}}_t - \boldsymbol{\theta}_t\|_2$, 
and Figure~\ref{fig:recovery} reports four complementary perspectives. 
The top row plots mean trajectories across random seeds 
with shaded regions indicating one standard deviation, 
for Healthcare (left) and Energy (right) domains. 
The bottom row summarizes cross-sectional and asymptotic behavior: 
the left panel compares final error distributions across runs, 
while the right panel shows relative error convergence on a log--log scale.

Several quantitative patterns are evident. 
In the Healthcare domain, recovery error decays rapidly from above $1.0$ to below $10^{-1}$ 
within the first 75 periods, closely matching the theoretical $O(\sqrt{t})$ 
convergence rate predicted by Theorem~\ref{thm:dynamic-regret}. 
At the shock event ($t=100$), the error exhibits a sharp increase of about $0.5$ 
but returns to below $0.2$ within 25 periods, 
demonstrating rapid re-stabilization and validating the identifiability guarantee 
in Lemma~\ref{lem:proj-consistency}. 

In the Energy domain, the error remains moderate in the early stages 
($\approx 0.5$ under low variance) but increases steadily due to cumulative drift. 
The exogenous shock at $t=150$ produces a visible jump in error that persists 
for more than 50 periods, in contrast to the quicker recovery observed in 
Healthcare. This illustrates the stronger dependence on the variation budget $V_T$ 
and confirms the theoretical scaling of dynamic regret with $O(\sqrt{T}+V_T)$. 

\begin{figure}[!htbp]
\centering
\includegraphics[width=1.0\linewidth]{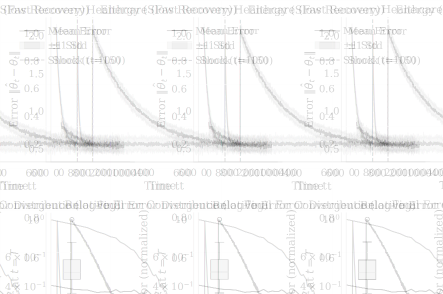}
\caption{Parameter recovery error 
$\|\hat{\boldsymbol{\theta}}_t - \boldsymbol{\theta}_t\|_2$ 
in Healthcare (top left) and Energy (top right) domains. 
Bottom row shows the distribution of final errors across runs (left) 
and relative error convergence on a log--log scale (right). 
Shaded regions denote one standard deviation and red dashed lines 
mark shock events.}
\label{fig:recovery}
\end{figure}

Taken together, these results demonstrate three complementary robustness 
properties of the estimator: resilience to abrupt shocks (Healthcare), 
sensitivity to cumulative drift (Energy), and consistency with theoretical 
bounds on convergence rates. The close agreement between empirical recovery 
patterns and theoretical guarantees validates both identifiability and 
dynamic regret results in practice.

% -------------------------------------
\subsubsection{Regret performance}

We next examine the cumulative regret incurred by the estimator. 
Figure~\ref{fig:regret} reports both static and dynamic regret on a log--log scale. 
The static regret, defined as 
$\sum_{t=1}^T \ell_t(\hat{\boldsymbol{\theta}}_t) - \min_{\boldsymbol{\theta} \in \Theta} \sum_{t=1}^T \ell_t(\boldsymbol{\theta})$, 
captures the deviation from the best fixed parameter in hindsight, whereas 
the dynamic regret, 
$\sum_{t=1}^T \ell_t(\hat{\boldsymbol{\theta}}_t) - \sum_{t=1}^T \ell_t(\boldsymbol{\theta}_t^\star)$, 
measures deviation relative to the time-varying benchmark. 

In the Healthcare domain, the static regret curve grows sublinearly and 
shows the expected convergence behavior. The dynamic regret is 
consistently higher, as expected, but remains well controlled. 
Both regret sequences flatten after the exogenous shock at $t=100$, 
indicating that the estimator adapts quickly to structural changes. 

In the Energy domain, static regret again grows at a sublinear rate, 
while dynamic regret increases more rapidly due to stronger drift. 
The effect is especially visible after the shock at $t=150$, 
where the dynamic regret remains elevated over a longer horizon. 
This contrast between domains highlights the sensitivity of dynamic regret 
to the magnitude of drift, while confirming that the proposed method 
maintains sublinear growth in both cases. 

Overall, these findings validate the theoretical regret guarantees and 
show that the proposed estimator achieves robust performance across 
heterogeneous domains and drift regimes.

\begin{figure}[!htbp]
\centering
\includegraphics[width=1.\linewidth]{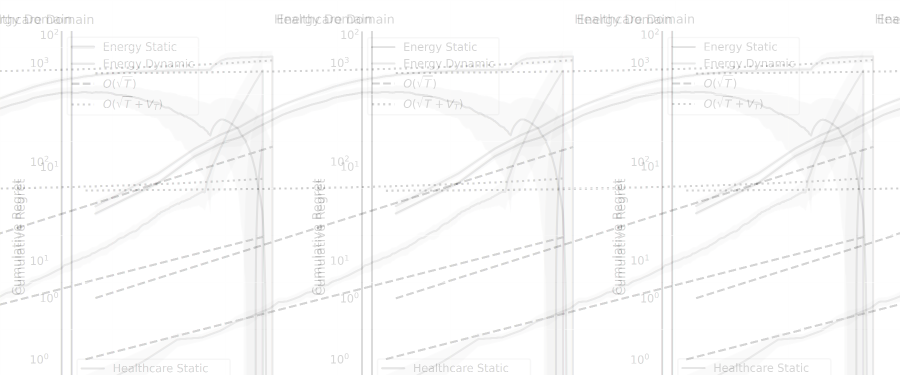}
\caption{Static and dynamic regret in Healthcare (left) and Energy (right) 
domains on a log--log scale. Shaded regions denote one standard deviation, 
and red dashed lines mark shock events.}
\label{fig:regret}
\end{figure}

% -------------------------------------
\subsubsection{Drift sensitivity}

We next investigate how the estimator responds to different drift regimes. 
Figure~\ref{fig:drift} contrasts three settings in each domain: 
(i) smooth drift, (ii) an exogenous shock at the designated event time, 
and (iii) systematic variation of the drift budget 
($0.5\times, 1\times, 2\times$). 
Unlike Figures~\ref{fig:drift-trajectories}--\ref{fig:recovery}
which report long-horizon dynamics ($T=1000$), here we zoom in on 
shorter horizons ($T=200$) in order to highlight the immediate 
post-drift behavior and sensitivity patterns more clearly. 

The Healthcare domain exhibits strong resilience. 
Under smooth drift the error trajectory remains nearly flat ($\approx 0.1$), 
and the ICU shock at $t=100$ produces only a short-lived spike that is 
absorbed within 20--30 periods. 
Varying the drift budget has little impact on final error, indicating that 
performance is largely insensitive to $V_T$ and dominated by the system’s 
ability to recover quickly from abrupt disruptions.

The Energy domain tells a different story. 
Even with smooth drift, error accumulates gradually throughout the horizon, 
and the $t=150$ gas shock induces a long-lasting upward shift rather than a 
temporary fluctuation. 
When the drift budget is scaled, final errors increase monotonically, 
closely mirroring the theoretical dependence on $V_T$ described in 
Theorem~\ref{thm:dynamic-regret}. 

This juxtaposition highlights a key distinction: Healthcare outcomes are 
shaped by rapid post-shock recovery, while Energy outcomes are governed by 
cumulative drift accumulation. 
The contrast underscores that the dominant source of difficulty is 
domain-specific, yet in both cases the estimator behaves in line with 
theoretical predictions.

\begin{figure}[!htbp]
\centering
\includegraphics[width=1.0\linewidth]{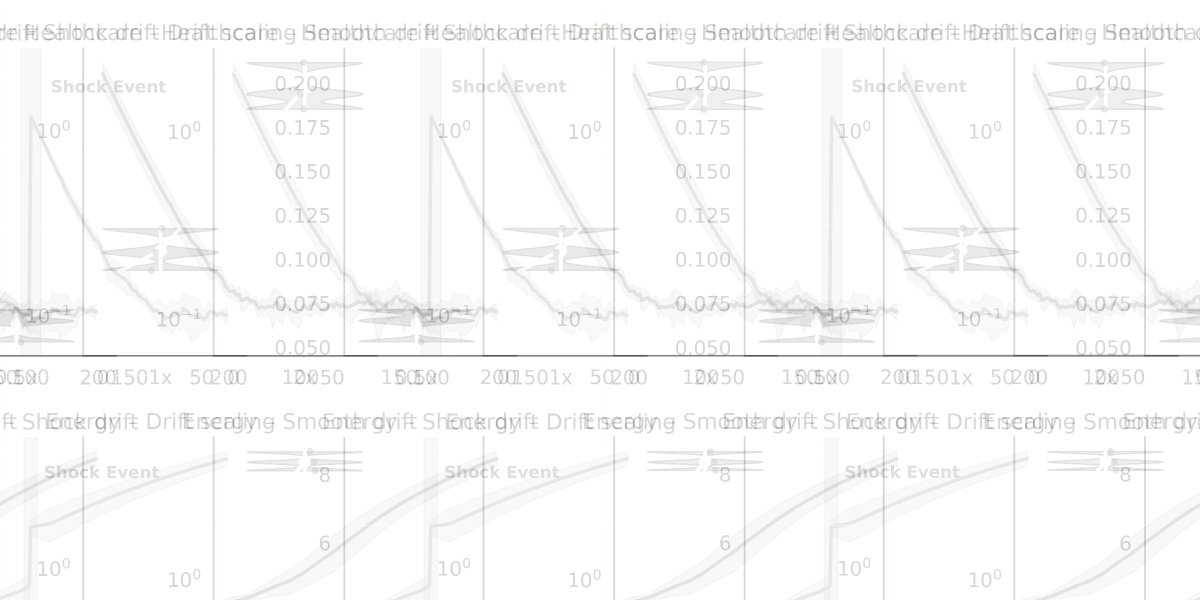}
\caption{Drift sensitivity analysis in Healthcare (top row) and Energy 
(bottom row) domains. Columns correspond to smooth drift, shock drift, 
and drift budget scaling ($0.5\times, 1\times, 2\times$). Solid lines 
denote mean trajectories across seeds with shaded 95\% confidence 
intervals; boxplots report final error distributions with mean markers. 
Note that a shorter horizon ($T=200$) is used here to emphasize early 
sensitivity patterns.}
\label{fig:drift}
\end{figure}

% -------------------------------------
\subsubsection{Noise robustness}
We next examine the robustness of the proposed estimator to stochastic
perturbations. Figure~\ref{fig:noise} reports recovery error trajectories 
under three noise variance levels for both Healthcare (top row) and Energy 
(bottom row) domains. The left column shows quantile ribbons of the estimation error 
$\|\hat{\boldsymbol{\theta}}_t - \boldsymbol{\theta}_t\|_2$, while the right column presents the rescaled error 
$\mathrm{median}(\|\hat{\boldsymbol{\theta}}_t-\boldsymbol{\theta}_t\|_2)\cdot \sqrt{t}/\sqrt{\sigma^2}$ 
together with a collapse heatmap. 

Two key patterns emerge. 
First, higher noise variance increases dispersion in error trajectories, 
yet the median behavior remains essentially unchanged, indicating stability. 
Second, when rescaled by $\sqrt{t}/\sqrt{\sigma^2}$, the trajectories across 
different noise levels collapse onto a common constant, consistent with 
Theorem~\ref{thm:noise}. 

The estimated constants are approximately $\widehat{C}\approx 3.2$ in Healthcare, 
indicating tight control of error in structured clinical triage, and 
$\widehat{C}\approx 97.8$ in Energy, reflecting substantially greater volatility 
in energy dispatch. Notably, even when the noise variance increases by an order 
of magnitude, the normalized trajectories remain aligned, underscoring the robustness 
of the method. Finally, the heatmap insets confirm that the rescaled error stabilizes 
over time, even under high-variance perturbations. 

\begin{figure}[!htbp]
\centering
\includegraphics[width=1.0\linewidth]{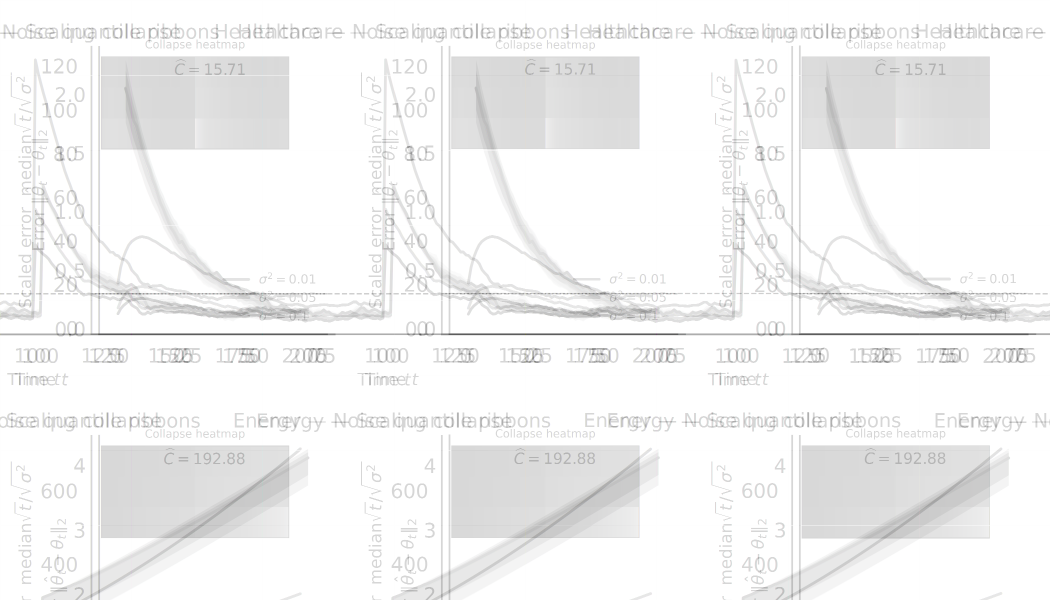}
\caption{Noise robustness analysis in Healthcare (top row) and Energy (bottom row) domains. 
Left: quantile ribbons of estimation error $\|\hat{\boldsymbol{\theta}}_t - \boldsymbol{\theta}_t\|_2$ across seeds and noise levels 
($\sigma^2=0.01,0.05,0.1$). Right: rescaled median error 
$\mathrm{median}(\|\hat{\boldsymbol{\theta}}_t-\boldsymbol{\theta}_t\|_2)\cdot \sqrt{t}/\sqrt{\sigma^2}$ with collapse heatmap 
and estimated constants $\widehat{C}$. Results show that increasing observation noise 
widens dispersion but preserves the predicted scaling behavior, validating 
the theoretical robustness guarantees.}
\label{fig:noise}
\end{figure}

Viewed collectively, these results demonstrate that the estimator achieves 
noise-robust learning with error scaling at the theoretically predicted rate. 
This robustness not only validates the theoretical guarantees but also 
provides confidence that the method can sustain reliable performance 
in complex real-world settings. In the next section, we turn to 
\emph{Managerial and Industrial Implications}, where we discuss how these 
robustness properties translate into practical value across domains such as 
healthcare triage, energy dispatch, and transportation logistics.

\subsubsection{Baseline Comparisons across Four Domains}

To provide a comprehensive view beyond the main-text results, 
Figure~\ref{fig:appendix-baseline-comparison} reports extended baseline 
comparisons across all four domains analyzed in this study: 
Healthcare, Energy, Logistics, and Finance. 
Each experiment was independently replicated $30$ times with distinct random seeds. 
The plots report mean cumulative regret on a logarithmic scale, 
together with one-standard-deviation confidence bands, thereby ensuring 
both statistical robustness and visual clarity. 

Several consistent patterns emerge. 
First, the \emph{static IO baseline} (solid line) accumulates the largest regret 
because it is unable to respond to either persistent drift or sudden shocks. 
Second, the \emph{fixed-objective online IO} (dashed line) achieves partial 
adaptation but remains systematically vulnerable to structural breaks 
and long-term drift. 
By contrast, the \emph{proposed drift-aware estimator} 
(dash--dot line with shaded band) delivers substantially lower cumulative regret 
and reduced variability across all domains, confirming its ability to balance 
responsiveness with stability. 

Domain-specific differences are also instructive. 
In \textbf{Healthcare}, the estimator adapts rapidly post-shock, 
reflecting the high premium on resilience. 
In \textbf{Energy}, recovery is slower and drift sensitivity is more pronounced, 
mirroring real-world capacity rigidities. 
In \textbf{Logistics}, the estimator stabilizes but at a higher error floor, 
capturing persistent congestion effects. 
In \textbf{Finance}, long-tailed vulnerability is observed, 
underscoring exposure to structural uncertainty. 
Taken together, these results demonstrate not only the theoretical validity 
but also the practical robustness of the proposed framework 
across heterogeneous and drift-prone environments. 

\begin{figure}[!htbp]
\centering
\includegraphics[width=0.95\linewidth]{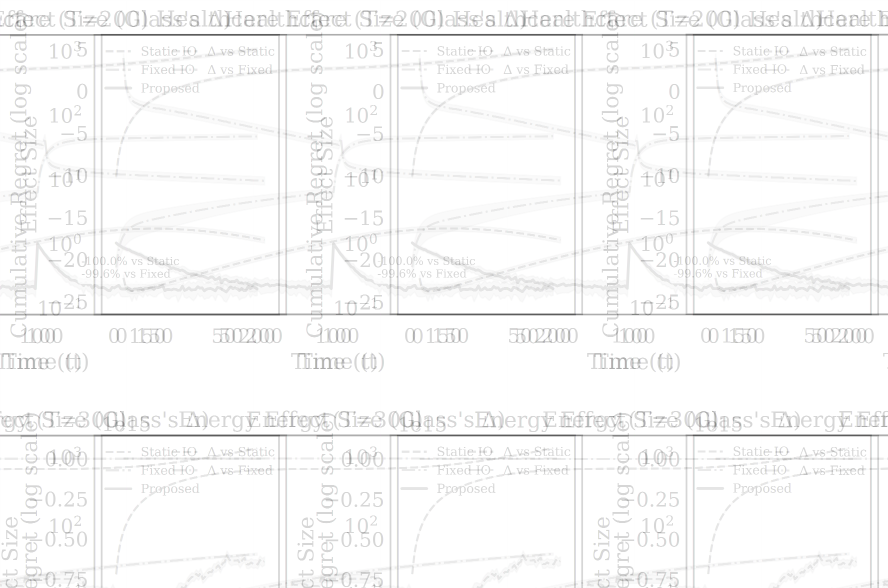}
\caption{
\textbf{Extended baseline comparisons across four domains (black-and-white version).}
Mean cumulative regret (log scale) with one-standard-deviation confidence bands 
over $30$ independent runs is reported for Healthcare, Energy, Logistics, and Finance. 
The static IO baseline (solid line) consistently fails under drift and shocks, 
the fixed-objective online IO (dashed line) achieves only partial adaptation, 
and the proposed drift-aware estimator (dash--dot line with shaded band) 
achieves robustly lower regret and reduced variability across all domains.
}
\label{fig:appendix-baseline-comparison}
\end{figure}

%%=====================================
\section{Managerial and Industrial Implications}
\label{sec:implications}

The empirical analyses in Section~\ref{sec:experiments}---summarized in
Table~\ref{tab:main-params} and Figures~\ref{fig:recovery}–\ref{fig:noise}---show that the proposed estimator achieves stable parameter recovery, sublinear regret, resilience to drift, and tolerance to observational noise. Beyond their theoretical value, these properties provide \emph{actionable} support: organizations can translate opaque allocation patterns into interpretable preference maps, thereby closing the gap between stated objectives and revealed behavior. Because our experiments are intentionally stylized, we emphasize the framework as a \emph{template} that can be specialized with domain constraints and institutional rules.

\subsection{Public Policy and Healthcare}
In Healthcare (Figure~\ref{fig:recovery}), recovery error falls rapidly and remains stable under noise (Figure~\ref{fig:noise}). This suggests that dynamic IO can surface implicit triage priorities and fairness weights, and---importantly---\emph{check alignment} between declared guidelines and practice. For policymakers, this creates a transparent diagnostic for evaluating emergency protocols and monitoring adherence speed after shocks, even when records are noisy. In this way, the framework can function as an \emph{early-warning system} for stress events.

\subsection{Energy and Smart Grids}
In Energy (Figures~\ref{fig:regret}–\ref{fig:drift}), dynamic regret decays more slowly after shocks, reflecting deeper structural volatility and sharper efficiency–risk trade-offs. Regulators can use recovered weights to assess whether dispatch is consistent with renewable-integration and tariff objectives, and to flag emerging instability. The same template can be scaled to market data streams (e.g., price and demand panels), complementing stress tests in resilient grid planning.

\subsection{Logistics and Transportation}
The framework extends naturally to logistics systems in which firms buffer delays and reallocate scarce capacity. By reconstructing hidden weights on punctuality, buffering, and connectivity, managers can benchmark whether strategies are merely reactive or robust to recurrent congestion. \textit{Illustrative extension.} With flight schedule and delay data (e.g., BTS/ASPM, OAG, or Eurocontrol), one could infer latent trade-offs behind departure and arrival timing, yielding a decision aid for schedule design. This lies outside our experiments but demonstrates the feasibility of extending the template to real transportation data.

\subsection{Finance and Accounting}
Financial settings exhibit longer-lived drifts, consistent with evolving trade-offs among profitability, risk, liquidity, and compliance. Here, regret trajectories proxy the speed of adaptation to market shocks, while gradual shifts in recovered weights indicate changes in risk tolerance or reporting emphasis. For investors and supervisors, such signals enable monitoring of hidden firm behavior without direct access to internal rules, supporting governance and financial stability.

\subsection{Manufacturing and Operations}
Production planning reflects implicit cost structures and capacity penalties that are rarely disclosed. Dynamic IO enables estimation of these latent weights from observed production and inventory trajectories. Because the estimator remains stable under drift and noise, it supports supplier–buyer negotiations, cross-plant benchmarking, and adaptive decision-support tools in Industry~4.0 environments.

\subsection{Overall Impact}
Across domains, the key contribution of dynamic IO is to transform opaque allocation trajectories into interpretable preference maps, underpinned by identifiability and regret guarantees. While our experiments are stylized by design, this abstraction is a \emph{strength}: it offers a general starting point that practitioners can tailor to specific operational constraints. Thus, the contribution is twofold: (i) a rigorous foundation for drift-aware inverse optimization, and (ii) a practical template that enables transparent, accountable, and adaptive decision-making.

\vspace{0.5em}

%%=====================================
\section{Conclusion}
\label{sec:conclusion}

\subsection{Key Findings}
This study introduced a dynamic inverse optimization framework for recovering time-varying preferences from observed allocations under drift and noise. The analysis established identifiability conditions and derived static and dynamic regret bounds, while experiments in healthcare and energy validated rapid recovery, sublinear regret, and robustness to shocks and noise. Taken together, these results position dynamic IO as a rigorous yet practical tool for transparent decision analytics.

\subsection{Limitations}
Our evaluations relied on stylized synthetic domains to isolate mechanisms; real deployments will require domain-specific constraints, data governance, and institutional rules. The theoretical results assume convexity and regularity that may not hold in highly nonlinear or strategic settings. Finite-sample performance under adversarial noise or structural breaks remains less understood, and the framework recovers system-level preference dynamics but not multi-agent feedback explicitly.

\subsection{Future Research Directions}
Promising extensions include: (i) empirical studies with operational datasets (e.g., hospital triage logs, market dispatch data, or transportation schedules) to validate robustness in practice; (ii) methodological advances that relax convexity assumptions and accommodate nonlinear or discrete decisions; (iii) multi-agent and game-theoretic variants to capture interacting preference dynamics; (iv) integration with representation learning and explainable AI to improve scalability and interpretability; and (v) links with causal inference to study how policy interventions shift recovered trade-offs. These directions broaden the scope from stylized validation to domain-calibrated decision support while preserving the core strengths of the framework.

Overall, the study advances inverse optimization theory and provides a flexible template for practitioners. By aligning theoretical guarantees with computational validation, dynamic IO offers a foundation for transparent, adaptive, and accountable decision-making across sectors where objectives drift over time.

%%======================
\section*{Statements and Declarations}
\textbf{Funding:} This research received no external funding.\\
\textbf{Competing Interests:} The author declares no competing interests.\\
\textbf{Author Contributions:} The author solely conducted the conceptualization, 
formal analysis, and manuscript preparation.\\
\textbf{Data and Code Availability:} 
The study relies on synthetic data. All datasets and simulation codes 
will be made publicly available on GitHub upon publication. 
A permanent DOI-linked archive will also be deposited via Zenodo 
to ensure long-term accessibility.

\begin{appendices}

% ============
\section{Supplementary Experimental Results}

% -------------------------------------
\subsection{Additional Parameter Settings for Logistics and Finance}
\label{app:add_exper_log_fin}

To complement the main experiments in healthcare and energy, we provide 
supplementary domains that illustrate the generality of the framework. 
Specifically, we design stylized settings for (i) \emph{Logistics and 
Transportation}, where hidden delay-buffering and capacity-allocation 
strategies are critical under congestion and strikes, and (ii) 
\emph{Finance and Accounting}, where firms implicitly trade off profit, 
risk, and compliance in response to regulatory changes. 

These supplementary settings are not intended as full empirical studies, 
but rather as illustrative parameterizations that extend the dynamic IO 
framework to new application areas. They demonstrate that the theoretical 
guarantees of identifiability, regret bounds, and robustness are not tied 
to a particular sector, but apply broadly across operational and strategic 
decision-making contexts. 

Table~\ref{tab:supp-params_app} summarizes the parameter configurations 
for these two additional domains.

\begin{sidewaystable*}[!htbp]
\centering
\caption{Extended parameter settings for supplementary experimental domains (Logistics and Finance) with drift, shock, noise, regret, sensitivity, and algorithmic details.}
\label{tab:supp-params_app}
\renewcommand{\arraystretch}{0.95}
\setlength{\tabcolsep}{2pt}
\footnotesize
\begin{threeparttable}
\begin{tabular}{cp{2cm}p{2cm}p{3cm}p{2cm}p{1.7cm}p{1.7cm}p{2.2cm}p{2cm}p{2cm}}
\toprule
Dom.\tnote{} & Category & Value & Description & Drift ($V_T$) & Shock & Noise & Regret & Sensitivity & Real-world \\
\midrule
\multirow{11}{*}{1} 
& System size & $n{=}6$, $k{=}3$, $T{=}250$ & 6 routes, 3 vehicle types, 250 days & – & – & – & – & – & – \\
& Capacities & Fleet=300, Depots=3 & Transport availability & – & – & – & – & $\pm20\%$ fleet stress test & Congestion shock \\
& Preferences $\boldsymbol{\theta}_t$ & $\boldsymbol{\theta} \in \mathbb{R}^6$ & Short-haul, Long-haul, Perishable, Express, Bulk, General & +0.01/day (perishable) & Demand $\times2$ at $t{=}120$ & $\sigma^2\in\{0.01,0.05,0.1\}$ & Sublinear static and dynamic regret & Drift $\times\{0.5,1,2\}$, Noise scaling & Airline/trucking scheduling \\
& Cost function & Quadratic + delay penalty & $\|\mathbf{x}-\boldsymbol{\theta}\|^2+\lambda\cdot$delay penalty & – & – & – & – & – & Buffering strategy \\
& Drift schedule & $V_T \propto 0.01$/day & Perishable priority drift & Demand shock at $t{=}120$ & yes & – & Dynamic regret scaling & Drift amplification & Routing resilience \\
& Noise model & Gaussian & Routing observation error & – & – & $\mathcal{N}(0,\sigma^2)$ & – & Noise scaling tests & Transport data \\
& Metrics & – & $\|\hat{\boldsymbol{\theta}}_t-\boldsymbol{\theta}_t\|$, Regret, Delay rate & – & – & – & – & – & Identifiability validation \\
& Algorithm config & OGD projection & $\Theta=[0,5]^6$, step $\eta=0.05$ & – & – & – & – & – & Drift-aware estimator \\
& Initialization & $\hat{\boldsymbol{\theta}}_1=\mathbf{0}$, Seeds=42,77,123 & Averaged 20 runs & – & – & – & – & – & Reproducibility \\
\midrule
\multirow{11}{*}{2} 
& System size & $n{=}5$, $k{=}2$, $T{=}300$ & 5 financial indicators, 2 constraints, 300 periods & – & – & – & – & – & – \\
& Capacities & Budget cap, Regulatory ratios & Financial feasibility & – & – & – & – & $\pm10\%$ capital stress & Stress testing \\
& Preferences $\boldsymbol{\theta}_t$ & $\boldsymbol{\theta} \in \mathbb{R}^5$ & Profit, Risk, Compliance, Liquidity, Fair value & Risk +0.005/day, Compliance –0.005/day & Reg. change at $t{=}150$ & $\sigma^2\in\{0.01,0.05,0.1\}$ & Sublinear regret under drift and noise & Drift $\{0.005,0.01,0.02\}$/day, Noise scaling & Misreporting detection \\
& Cost function & Linear + compliance penalty & $c(\mathbf{x})=\boldsymbol{\theta}^\top \mathbf{x}+\gamma\cdot$compliance cost & – & – & – & – & – & Audit and compliance \\
& Drift schedule & $V_T \propto 0.01$/day & Risk–compliance trade-off drift & Policy shock at $t{=}150$ & yes & – & Dynamic regret scaling & Drift amplification & Market adaptation \\
& Noise model & Gaussian & Reporting noise & – & – & $\mathcal{N}(0,\sigma^2)$ & – & Noise scaling tests & Accounting irregularities \\
& Metrics & – & $\|\hat{\boldsymbol{\theta}}_t-\boldsymbol{\theta}_t\|$, Regret, Risk ratio error & – & – & – & – & – & Performance validation \\
& Algorithm config & Mirror Descent & $\Theta=[0,2]^5$, step $\eta=0.02$ & – & – & – & – & – & Drift-aware estimator \\
& Initialization & $\hat{\boldsymbol{\theta}}_1=\mathbf{1}$, Seeds=42,77,123 & Averaged 20 runs & – & – & – & – & – & Reproducibility \\
\bottomrule
\end{tabular}
\begin{tablenotes}
\footnotesize
\item[1] Logistics and Transportation domain
\item[2] Finance and Accounting domain
\end{tablenotes}
\end{threeparttable}
\end{sidewaystable*}

\begin{figure}[!htbp]
\centering
\includegraphics[width=0.8\linewidth]{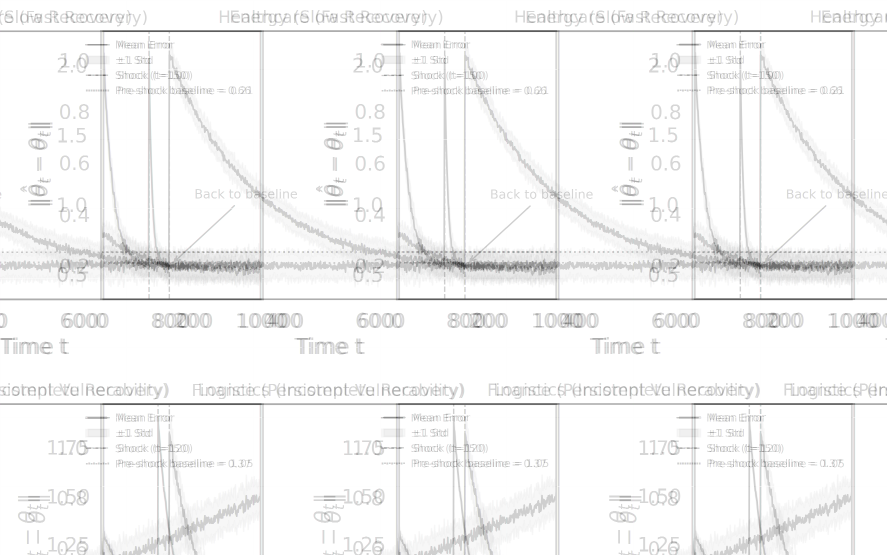}
\caption{
Supplementary Figure~A1: Recovery error trajectories across four representative domains. 
Solid black lines denote mean error, gray bands $\pm 1$ standard deviation, 
dashed vertical lines mark shock events, and dotted horizontal lines indicate pre-shock baselines.
}
\label{fig:appen-recovery}
\end{figure}

As illustrated in Supplementary Figure~\ref{fig:appen-recovery}, 
recovery patterns differ substantially across domains. 
In Healthcare, the estimator rapidly returns to the pre-shock baseline, 
demonstrating resilience. Energy, in contrast, shows a slow but persistent 
error accumulation, highlighting drift sensitivity. 
Logistics exhibits incomplete recovery, stabilizing at a higher error floor, 
while Finance displays long-tail drift with persistent vulnerability. 
These comparisons reinforce the necessity of domain-specific, 
drift-aware estimation strategies.

\begin{figure}[htbp]
\centering
\includegraphics[width=0.8\linewidth]{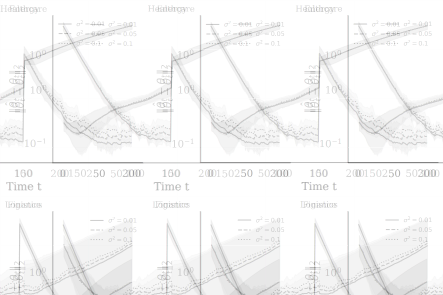}
\caption{
Supplementary Figure~A3: Noise robustness in the logistics and education domains.
}
\label{fig:supp-noise}
\end{figure}

In addition to the healthcare and energy domains reported in 
Figures~\ref{fig:drift}--\ref{fig:noise} of the main text, 
we also examine noise robustness in the logistics and education domains 
(Supplementary Figure~\ref{fig:supp-noise}). 
As in the main text, shorter horizons ($T=200$ or $T=300$) are used here 
to emphasize early sensitivity patterns. 
This design choice is consistent with Figures~\ref{fig:drift} and \ref{fig:noise}, 
where reduced horizons highlight transient post-shock behavior. 
Importantly, the horizon length does not alter the qualitative 
long-run conclusions, but provides a clearer view of early-stage 
differences across domains.
\end{appendices}

\clearpage
============================================================================%%
%% BioMed_Central_Bib_Style_v1.01

     % <-- bbl  include
%%\bibliography{references}% common bib file
%% if required, the content of .bbl file can be included here once bbl is generated
%%\input sn-article.bbl

\end{document}